\newtheorem{theorem}{Theorem}[section]
\newtheorem{lemma}[theorem]{Lemma}
\newtheorem{definition}[theorem]{definition}
\title{ an efficient validated asynchronous byzantine agreement protocol using committee}
\author{ {Nasit S Sony} \\
	University of California, Merced\\
	CA 95340, USA \\
	\texttt{nsony@ucmerced.edu} \\
}
\begin{document}
\maketitle

\begin{abstract}
We present a Byzantine agreement protocol to address the inefficiencies inherent in multi-valued Byzantine agreement protocols, i.e., a version of the Byzantine agreement protocol where every party broadcasts its request, and at the end of the protocol, every party agrees on one of the party's requests. The protocol we present is a validated asynchronous Byzantine agreement protocol, i.e., a party's request must be validated by some external validity property before it is proposed for agreement. Differently from most of the MVBA protocols, we allow only a subset of total parties to broadcast their requests instead of all, and we make the subset selection stochastic each time the parties choose to broadcast a new set of requests. Then, at the time of the agreement, we choose a party from the selected subset, and the parties reach an agreement on the selected party's broadcast. Extensive theoretical analysis shows that this approach can produce efficient output regarding messages and computation overhead, but the protocol is time-consuming.

\end{abstract}

\keywords{ Blockchain, Distributed Systems, Byzantine Agreement, System Security}

\section{Introduction}

Byzantine agreement (BA) is a fundamental problem in computer systems that can be used to design fault-tolerant systems, ensure data consistency, design consensus protocols for blockchain and cryptocurrency, decentralize decision-making, and ensure security against malicious actors, to name a few. In its basic formulation, BA assumes a system where multiple computers try to agree on a value, given that some of the computers might be Byzantine (whose behavior is arbitrary, unpredictable, and malicious). Many models with numerous system assumptions have been presented since the seminal work \cite{BYZ23} to solve the BA problem. It has been shown that the parties are unable to reach an agreement in an asynchronous network even if there is only one non-Byzantine fault. Most of the practical solutions of the protocol assumed either a synchronous or partially synchronous model. Since Bitcoin \cite{BITCOIN01}, the BA problem has received renewed interest, and given the demand, synchronous and partially synchronous models no longer provide the practical solution. Miller et al. \cite{HONEYBADGER01} prove that a protocol in a partially synchronous model cannot make progress in \textit{an intermittently synchronous network}. Therefore, interest has moved towards a solution for asynchronous networks. But most asynchronous BA protocols are theoretical and inefficient \cite{ SECURE02,SECURE03,SECURE06}. VABA \cite{BYZ17} is an $O(n^2)$ algorithm with the running time $O(1)$ in expectation where parties agree on one party's requests. The standard concept in the asynchronous network is letting every party broadcast the request because it does not use any leader to track the faults. Letting every party broadcast its request is beneficial at the time of atomic broadcast protocol design \cite{HONEYBADGER01, FASTERDUMBO} because, in the atomic broadcast protocol, the parties agree on multiple parties' requests and mitigate the overhead of high communication complexity. However, when parties agree on one party's output, letting every party broadcast its requests is a waste of resources. Though there is a study with fewer broadcasts, those received far less attention and do not guarantee a solution with probability \cite{KSizeVABA}. 



 In our recent works \cite{PMVBA,OHBBFT,SlimABC,cMVBA,KSizeVABA,sony2024agreement}, we presented a new approach to solve the multi-valued Byzantine agreement problem. We introduce a committee approach to provide a more efficient protocol. However, the introduction of committee brought new challenges when we tried to utilize the concept in the VABA protocol to enjoy efficiency. The VABA protocol \cite{BYZ17} is a view-based Byzantine agreement protocol. In each view, parties either decide on a value or adopt a value to promote in the next view. The protocol employs a proposal-promotion sub-protocol consisting of four sequential provable-broadcast steps. This structure closely resembles a three-step lock-commit protocol, with an additional fourth step to generate a commit proof. The commit proof ensures that at least $f+1$ honest parties have received a \textit{commit} for the leader's proposal. If the leader successfully completes the commit proof, the parties decide on the leader's proposed value, as $n-2f$ honest parties will have received the commit and disseminated it during the view-change phase.To increase the likelihood of agreement ($\frac{2}{3}$ probability), a party waits for $n-f$ commit proofs before electing a leader. If the elected leader fails to complete the broadcast, all parties adopt the leader's proposed value as a key and promote it in the next view. This ensures convergence, as every party promotes the same key, leading to agreement in subsequent views. Therefore, we take the question: \textit{Does allowing a subset of parties to broadcast their requests provide any improvement over the existing protocols?}

 To answer the question, we find out the associated challenges. After finding the challenges, we have done the following analysis and propose a solution based on the analysis.
 


\subsection{Challenges and Proposed Solutions}
To design a protocol that reduces the number of messages and computational overhead, we address the following key challenges:
\begin{enumerate}
    \item \textbf{Determining Necessary Broadcasts:} How many broadcasts are essential to maintain protocol progress?
    \item \textbf{Selection of Broadcasting Parties:} How can we select a subset of parties for broadcasting?
    \item \textbf{Preventing Dishonest Broadcasts:} How can we prevent non-selected dishonest parties from broadcasting?
    \item \textbf{Dispersing Broadcasts:} How can the broadcast messages efficiently reach all parties?
    \item \textbf{Adapting Leader Election:} How can we adapt the leader election process to ensure leaders are chosen from the selected subset?
\end{enumerate}

\paragraph{} To address the first and second challenges, we analyze adversarial (network and Byzantine parties). A dishonest party or adversary can read and delay messages from honest parties but must eventually deliver them to ensure protocol progress. Since the protocol requires at least one honest broadcast to proceed—and it's impossible to determine if a single party is honest—we select a set of parties large enough to include at least one honest party with overwhelming probability. We utilize the standard committee selection protocol to realize this property. The protocol randomly selects a subset of parties, with the subset of size-$f+1$. By choosing parties, we ensure that at least one honest party is included with overwhelming probability.

\paragraph{} To prevent the dishonest parties from making progress with their proposals, we enhance the provable-broadcast protocol. In the standard provable-broadcast protocol, every party can broadcast its request. This openness allows a non-selected dishonest party to bypass selection criteria and promote its proposal. To mitigate this, we introduce a security check in the provable-broadcast protocol. Honest parties only respond to proposals from selected parties, preventing dishonest parties from completing the first step of the proposal-promotion sub-protocol.

\paragraph{} Since the proposed protocol deviates from the standard approach of proposal selection, we need to ensure both the dispersal of messages and efficiency given the change condition. In asynchronous protocols, the standard practice is to wait for $n-f$ broadcasts to ensure progress, given that $f$ parties may be faulty. However, if only one honest party broadcasts, other parties must respond upon receiving the first broadcast to maintain liveness. We adapt this behavior to align with the reduced broadcasting subset.

\paragraph{} The VABA protocol uses the standard leader election protocol to elect a party for the agreement. However, the standard leader election protocol is insufficient as it may elect any party, including non-selected ones, as a leader. Therefore, we must adapt the \textit{Leader Election} for selected parties. To address this, we combine the leader election protocol with a mapping mechanism that ensures the elected leader belongs to the selected committee. This adaptation guarantees that the leader has been chosen from the subset of the parties, maintaining the integrity of the protocol.

By addressing these challenges, our proposed enhancements to the VABA protocol significantly reduce communication and computation overhead while maintaining the resilience and the standard properties of the VABA protocol.

The protocol's functionality can be understood as follows: Although only $f+1$ parties broadcast their requests, the protocol guarantees that at least one party successfully completes the promotion phase. The suggestion step ensures that a completed broadcast is disseminated to multiple parties. An honest party decides on a proposal from a party $p_i$ if the leader election protocol selects $p_i$ as the leader and the honest party has received either a commit proof or a commit message from $p_i$. The VABA protocol ensures consensus among all honest parties, such that they decide on the same proposal. Additionally, if the elected leader completes either the second or third step of the proposal-promotion phase, at least $f+1$ honest parties will have received the corresponding key proof or lock proof. This guarantees that the value is adopted and promoted in the next view, ensuring protocol progress and eventual agreement.

\subsection{Our Contribution}
We design a protocol that improves the efficiency of the existing VABA protocol. Our two main observations are that we can reduce the number of proposals, and the reduction helps to design an efficient protocol. We apply our reduction technique to design a validated asynchronous byzantine agreement (VABA) protocol.

\begin{itemize}
    \item Integrate the committee selection protocol that ensures the inclusion of at least one honest party with probability 1, ensuring the protocol's termination property.
    \item Propose eVABA, reducing communication messages by using the technique of number of proposal broadcast reduction.
    \item Integrate a leader election protocol and introduce the mapping of the leader to a selected party to achieve an agreement on a proposal by the selected parties.
    \item Provide theoretical analysis showing the efficiency and the correctness of the protocol.
\end{itemize}

\section{System Model}
We assume an asynchronous message passing system \cite{BYZ17,FASTERDUMBO, HONEYBADGER01}, which consists of a fixed set of parties. If the total number of parties is $n$ and the maximum number of byzantine parties is $f$, then $f <  \frac{n}{3} $.

\paragraph{Computation.} The model uses standard modern cryptographic assumptions and definitions from \cite{SECURE02, SECURE03}. We prototype the system modules’ computations as probabilistic Turing machines and provide $infeasible$ problems to the adversary thus the adversary is unable to solve the problem. We define a problem as an $infeasible$ if any polynomial time probabilistic algorithm solves it only with negligible probability. Since the computation modules are probabilistic Turing machines, the adversary solves a problem using a probabilistic polynomial time algorithm.  But from the definition of $infeasible$ problem, the probability to solve at least one such problem out of a polynomial in $k$ number of problems is negligible. So, we bound the total number of parties $n$ by a polynomial in $k$.

\paragraph{Communications.} We consider an asynchronous network, where communication is point-to-point, and the medium is reliable and authenticated \cite{BYZ20,BYZ30}. Reliability ensures that if an honest party sends a message to another honest party, then the adversary can determine the time of the delivery of the messages but unable to read, drop or modify the messages. On the other hand, an authenticated
medium ensures that if a party $p_i$ receives a message $m$, then another party $p_j$ sends the message $m$ before the party $p_i$ receives the message.
\subsection{Design Goal}
\subsubsection{Validated asynchronous byzantine agreement (VABA)} Our goal is to design a byzantine agreement protocol in asynchronous network where parties agree on a value proposed by some party. We follow Abraham et al. \cite{BYZ17} and define Efficient-VABA protocol that adopted the notion of the VABA with a selected valid function. This function ensures a proposed value is acceptable for a particular application.
\begin{definition}[Validated byzantine agreement]
A protocol solves the validated byzantine agreement with the base $externally-valid$ function, if the protocol satisfies the following conditions except with negligible probability:

\begin{itemize}
    \item \textbf{External validity:} Any honest party that terminates and decides $v$ such that $externally-valid\langle v \rangle = true$.
    \item \textbf{Agreement:} If an honest party decides $v$, then all the honest parties that terminate decide $v$.
    \item \textbf{Liveness:} If all honest parties participate and all associated messages get delivered, then all honest parties decide.
    \item \textbf{Integrity:} If the honest parties follow the protocol and decide on a value $v$, then $v$ is proposed by some party $valid\langle v, p_i \rangle$ = true, where value $v$ is proposed in the current or the previous view.
    \item \textbf{Efficiency:}  The number of message generated by the honest parties is probabilistically uniformly bounded.
\end{itemize}
\end{definition}

\section{Preliminaries}
\paragraph{Provable-Broadcast.} Provable-Broadcast for the selected parties satisfies except with negligible probability:
\begin{itemize}
    \item \textbf{PB-Integrity:} An honest party delivers a message at most once.
    \item \textbf{PB-Validity:} If an honest party $p_i$ delivers $m$, then $EX-PB-VAL_i\langle id,m \rangle = true.$
    \item \textbf{PB-Abandon-ability:} An honest party does not deliver any message after it invokes PB-abandon(ID).
    \item \textbf{PB-Provability:} For two $v$, $v'$, if a sender can produce tow threshold-signature $\sigma$, $\sigma'$ s.t. threshold-validate($ \langle id, v \rangle, \sigma$) = true, then threshold-validate($\langle id, v'\rangle, v'$) = true, i) $v=v'$ and ii) f+1 honest parties delivered a message m s.t. m.v = v.
    \item \textbf{PB-Termination:} If the sender is honest, no honest party invokes PB-abandon(ID), all messages among honest parties arrive, and the message m that is being broadcast is externally valid, then (i) all honest parties deliver m, and (ii) PB(ID, m) return (to the sender) $\sigma$, which satisfies $threshold-validate(\langle ID, m.v \rangle,\sigma)$ = true.
    \item \textbf{PB-Selected:} If an honest party $p_i$ delivers $m$, then $m$ is proposed by a selected party.
   
\end{itemize}
\paragraph{Threshold signature scheme:} We adopt the threshold-signature scheme from \cite{SECURE02,THRESH01,THRESH02}. In $\langle n, t, f\rangle$ threshold-signature scheme, there is a total of $n$ parties, among them $f$ can be faulty and $t$ sign-shares are both necessary and sufficient to create a threshold-signature, where $f<t \leq n$. The signature scheme satisfies the following properties except with negligible probability:

\begin{itemize}
    \item \textit{Unforgeability:} No polynomial-time adversary can forge a signature that can be verified correctly (by honest parties) of any message $m$ without querying the signature algorithm.
    \item \textit{Robustness:} If a message gets a signature from the signature algorithm, eventually all honest parties get the signature can verify the signature.
\end{itemize}
\paragraph{Threshold coin-tossing:} We assume a trusted third party as a dealer which provides a pseudo-random generator (PRG)$ G: R \rightarrow \{1,...,n\}^s$ that gets a string as input and returns a set size of $s$.

At the beginning of the protocol, the dealer provides a private function $CoinShare_i$ to every party $P_i$, and two public functions: $Coin-Share-Verify$ and $Coin-Toss$. The function $Coin-Toss$ takes $f+1$ $coin-share$ as input and returns a unique and pseudorandom set. The scheme satisfies the following properties with negligible probability:

\begin{itemize}
    \item For each party $i \in \{1..n\}$ and for every string $r$, $Coin-Share-Verify\langle r, i, \sigma \rangle = true$ if and only if $\sigma = CoinShare_i(r)$;
    \item An adversary is unable to create a share $CoinShare_i(r)$ for an honest party $P_i$.
    \item For every string r, $Coin-Toss\langle r, \sum\rangle = G(r)$ if and only if $\sum \geq f+1$ and $\sigma \in \sum$ and a party $P_i$ s.t. $Coin-Share-Verify\langle r,i,\sigma \rangle = true$.
\end{itemize}

\paragraph{\textbf{$(1, \kappa, \epsilon)$}- Committee Selection}
A CS protocol is executed among $n$ parties (identified from 1 through $n$). If at least $f$ + 1 honest parties participate, the protocol terminates with honest parties output a $\kappa$-sized committee set $C$ such that at least one of $C$ is an honest party. 

\paragraph{Committee Selection Properties.}
The protocol satisfies the following properties except with negligible probability in the cryptographic security parameter:$f+1$:

\begin{itemize}
    \item \textbf{Termination.} If $\langle f+1 \rangle$ honest parties participate in committee selection and the adversary delivers the messages, then honest parties output $C$. 
    \item \textbf{Agreement.} Any two honest parties output same set $C$. 
    \item \textbf{Validity.} If any honest party outputs $C$, then (i) $|C| = (f+1)$, (ii) The probability of every party $p_i \in C$ is same, and (iii) $C$ contains at least one honest party with probability $1-\epsilon$.
    \item \textbf{Unpredictability.} The probability of the adversary to predict the returned committee before an honest party participates is at most $\frac{1}{^nC_(f+1)}$.
\end{itemize}

\paragraph{Proof for Validity property.} Algorithm \ref{algo:cs} satisfies the Termination, Agreement and Unpredictability properties of CE follows from the properties of threshold coin-tossing \cite{FASTERDUMBO}. We only repeat the validity proof here.

\begin{lemma}(Validity of Committee Selection.) If $n=3f+1$, $f+1 \leq f$ and CE(id) returns a set $C$ containing at least one honest party except with $exp(-\epsilon(\kappa))$ probability.
\end{lemma}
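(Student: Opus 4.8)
The plan is to reduce the statement to a clean random-sampling bound. The committee $C$ is the output of the pseudorandom set generator $G$ seeded by the coin-tossing output, so once I argue that $C$ is (computationally indistinguishable from) a uniformly random subset of the $n$ parties whose distribution is independent of the adversary's choice of which $\le f$ parties are Byzantine, the claim reduces to bounding the probability that a fixed set of at most $f$ ``bad'' parties swallows the entire random committee.

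First I would establish this independence. By Validity~(ii) every party appears in $C$ with equal probability, and by the Unpredictability property the adversary can guess the returned committee only with probability $\tfrac{1}{\binom{n}{f+1}}$ before any honest party contributes its coin-share. Combining these with the pseudorandomness of $G$ and the robustness/unforgeability of the coin-tossing scheme, I may treat $C$ as a uniform random sample of the prescribed size $\kappa$, drawn from randomness that is fresh relative to the fixed Byzantine set $B$ with $|B|\le f$.

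Next I would bound the failure event $C\subseteq B$ (every selected party is Byzantine). Sampling the $\kappa$ members without replacement,
$$\Pr[C\subseteq B]=\frac{\binom{f}{\kappa}}{\binom{n}{\kappa}}=\prod_{i=0}^{\kappa-1}\frac{f-i}{n-i}\le\left(\frac{f}{n}\right)^{\kappa},$$
where the final inequality uses $\tfrac{f-i}{n-i}\le\tfrac{f}{n}$ for every $i\ge 0$ (equivalently $i(n-f)\ge 0$). Substituting $n=3f+1$ gives $\tfrac{f}{n}=\tfrac{f}{3f+1}<\tfrac13$, so $\Pr[C\subseteq B]<3^{-\kappa}=\exp(-\kappa\ln 3)$. (If $G$ samples with replacement, each draw is Byzantine with probability $f/n$ independently, and the identical bound $(f/n)^{\kappa}$ holds directly.) Setting $\epsilon(\kappa):=\kappa\ln 3=\Omega(\kappa)$, the complementary event that $C$ contains at least one honest party occurs except with probability $\exp(-\epsilon(\kappa))$, which is the claim.

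The main obstacle is not the counting step, which is routine, but pinning down the independence used in the reduction: I must argue that the adversary, who schedules messages and controls the Byzantine parties' shares, cannot steer the seed (and hence $C$) toward $B$. This is exactly where the coin-tossing properties do the work, since $f+1$ shares are required and an honest share is unforgeable, so at least one honest (hence adversary-unpredictable) contribution enters every successful $Coin\text{-}Toss$, matching the bound $\tfrac{1}{\binom{n}{f+1}}$ and preventing any adaptive bias. A secondary technicality to dispatch cleanly is the regime of $\kappa$: when $\kappa\ge f+1$ and the selected indices are distinct, the guarantee is in fact deterministic by pigeonhole, whereas the probabilistic bound above is the meaningful statement precisely when $\kappa\le f$ or when $G$ may repeat indices and thereby return fewer than $f+1$ distinct members; bounding the all-Byzantine probability over the draws themselves, as done here, covers both regimes uniformly.
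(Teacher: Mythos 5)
Your proposal is correct and follows essentially the same route as the paper: model $C$ as a uniformly random $\kappa$-subset and bound the all-Byzantine event by $\binom{f}{\kappa}/\binom{n}{\kappa}\leq(1/3)^{\kappa}=\exp(-\epsilon(\kappa))$, which is the paper's entire argument. Your write-up additionally supplies two things the paper merely asserts --- the product expansion $\prod_{i=0}^{\kappa-1}\frac{f-i}{n-i}\leq(f/n)^{\kappa}$ justifying the key inequality, and the coin-tossing-based independence of $C$ from the Byzantine set --- so it is a strictly more careful rendering of the same proof.
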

\begin{proof}
    Due to pseudo-randomness, hence, the total case is $\binom{n}{\kappa}$ of random choose $\kappa$ parties and the total case is $\binom{f}{\kappa}$ of the set C containing no honest parties. Therefore we have, \newline
    $p =  \frac{\binom{f}{\kappa}}{ \binom{n}{\kappa}} = \frac{ \frac{f!}{\kappa!(f-\kappa)!}}{ \frac{n!}{\kappa!(n-\kappa)!}} \leq (\frac{1}{3})^\kappa = exp(-\epsilon(\kappa))$
\end{proof}

\section{The Proposed Protocols}
\begin{figure}[h]
    \centering
    \includegraphics[width=0.882\textwidth,height=0.4\textwidth]{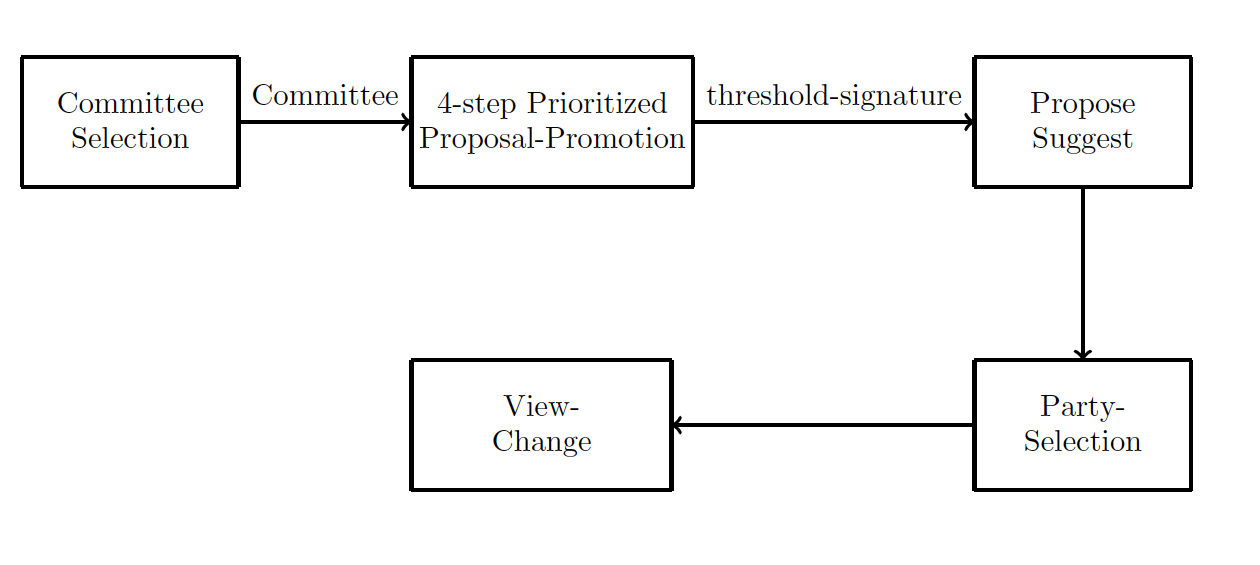}
    \caption{An overview of the proposed protocol.}
    \label{fig:E-VABA}
\end{figure}
In this section, we present the proposed protocol. We first provide the task of each step of the protocol and finally, the integration of the steps. Since we propose an algorithm that improves the VABA protocol, we adopted the definitions and the protocols from VABA \cite{BYZ17}. The visual description of the Efficient-VABA protocol's steps is given in Figure \ref{fig:E-VABA}

\subsection{Committee Selection} Our main contribution is to use a committee to improve the efficiency of the VABA \cite{BYZ17} protocol. Therefore, we need to choose the size of the committee and the selection process in a way; hence, there is at least one honest party to ensure the progress of the protocol. We use the standard committee selection protocol, where we choose $(f+1)$ as a security parameter bounded with $n$, the party selection is random, and at least one selected party is honest.
\subsubsection{Committee selection protocol.} The committee selection protocol is given in Algorithm \ref{algo:cs}. We use the standard cryptographic abstraction to realize the randomness in the time committee selection. The committee selection is standard practice and adopted from pMVBA \cite{PMVBA}.
\begin{algorithm}[hbt!]
\LinesNumbered
\DontPrintSemicolon
\SetAlgoNoEnd
\SetAlgoNoLine

\SetKwProg{LV}{Local variables initialization:}{}{}
\LV{}{
   $\Sigma \leftarrow \{\}$\;
}

\SetKwProg{un}{upon}{ do}{}
\un{$SelectCommittee \langle id, view, (f+1)\rangle $ invocation}
{
  $\sigma _i$ $\leftarrow$ $CShare\langle id \rangle$ \;
  $multi-cast \langle SHARE, id, \sigma _i, view \rangle$\;
  wait until $|\Sigma| = f+1$\;
  
  \KwRet $CToss \langle id, \Sigma, (f+1)) \rangle$\;
}
\SetKwProg{un}{upon receiving}{ do}{}
\un{$ \langle SHARE, id, \sigma _k, view \rangle$ from a party $p_{k}$ for the first time}
{
 \uIf{$CVerify\langle id_{k}, \sigma _k\rangle = true$ }{
    $\Sigma \leftarrow {\sigma _k \cup \Sigma}$}
}

\caption{Committee - Selection: Protocol for party  $p_i$}
\label{algo:cs}
\end{algorithm}
\subsection{Prioritized Proposal-Promotion}
The prioritized proposal-promotion protocol is adapted from the VABA\cite{BYZ17} protocol. The protocol invokes the prioritized provable-broadcast protocol four times. Each time the protocol returns a threshold-signature which is the input of the next step. The final step's return value is proof that the protocol has completed the promotion. The construction of prioritized provable-broadcast is given in subsection \ref{P-PB-D}.
\subparagraph*{Construction of Prioritized Proposal-Promotion (sender):} The pseudocode of the protocol is given in Algorithm \ref{alg:Proposal-Promotion}. The description of the protocol is given below: 

\begin{itemize}
    \item Upon the invocation of a promote protocol, a party invokes the P-PB protocol four times sequentially. The arguments of the protocol are the step number, value, and proof ($t_{sign}$). Finally, the party returns the $t_{sign}$ to the caller. (lines 02-05)
\end{itemize}

\begin{algorithm}
\AlgoDontDisplayBlockMarkers
\SetAlgoNoEnd
\SetAlgoNoLine
\DontPrintSemicolon
\SetKwProg{un}{upon}{ do}{}
\un{Promote $\langle ID, \langle v, prepare\rangle \rangle$  invocation}
{
  $t_{sign} \leftarrow \bot$\;
  \For{$step\leftarrow 1$ \KwTo $4$}{
    $t_{sign} \leftarrow P-PB \langle \langle ID, step\rangle, \langle v,\langle prepare, t_{sign}\rangle \rangle \rangle$\;
   
  } \KwRet $t_{sign}$\;
}
\caption{Proposal-Promotion with identification ID: Protocol for a sender.}
\label{alg:Proposal-Promotion}
\end{algorithm}

\subparagraph*{Construction of Prioritized Proposal-Promotion (receiver):} The pseudocode of the prioritized proposal-promotion (receiver) protocol is given in Algorithm \ref{algo:PP(receiver)}. The description of the protocol is given below:

\begin{itemize}
    \item Since the prioritized proposal-promotion invokes the P-PB protocol four times, and input of the next step is the returned threshold-signature of the previous step. So, when a party receives the delivery for P-PB, it gets a proof $t_{sign}$ of completion of the previous step (except the first step). The party saves the proof and value it receives at each step. (line 03-09)
    \item If a party receives an abandon invocation, the party abandons the P-PB protocol's messages for all steps. (line 10-12).
    \item An honest party returns the proof for each step upon invocation. (line 13-18)
\end{itemize}

\begin{algorithm}
\SetAlgoNoEnd
\SetAlgoNoLine
\DontPrintSemicolon
\SetKwProg{LV}{Local variables initialization:}{}{}
\LV{}{
 \textbf{prepare = lock = commit =$\langle \bot, \bot \rangle$} \;
}
   \BlankLine
   \BlankLine
\SetKwProg{un}{upon}{ do}{}
\un{$delivery\langle \langle ID,step\rangle, \langle v, \langle prepare, t_{in}\rangle\rangle\rangle$ }
{
  \If{$step = 2$}{
    $ {prepare} \leftarrow \langle v, t_{in}\rangle$\;
   }
  \If{$step = 3$}{
     $ {lock} \leftarrow \langle v, t_{in}\rangle$\;
  }
  \If{$step = 4$}{
     $ {commit} \leftarrow \langle v, t_{in}\rangle$\;
  }
  
}

\BlankLine
\SetKwProg{un}{upon}{ do}{}
\un{$abandon\langle ID\rangle$ }
{
   \For{$step=1,..,4$}{
      $P-PB-abandon\langle ID,step \rangle$\;
  }
}
\BlankLine
\BlankLine
\BlankLine
\SetKwProg{un}{upon}{}{}
\un{$getPrepare\langle ID\rangle$}{
   \KwRet prepare\;
}

\un{$getLock\langle ID\rangle$}{
    \KwRet lock\;
}
\un{$getCommit\langle ID\rangle$}{
    \KwRet commit\;
}

\caption{Proposal-Promotion with identification ID: Protocol for a receiver $p_i$.}
\label{algo:PP(receiver)}
\end{algorithm}

\subsubsection{Prioritized probable broadcast (P-PB)\label{P-PB-D}} 
We adopt the provable-broadcast from VABA to ensure that the adopted protocol only allows the committee members to broadcast their requests. This approach saves the channel from sending additional messages and cryptographic computation.  Provable-Broadcast for the selected parties can be obtained through same threshold-signature scheme. Generally, in the provable-broadcast, a party adds sign-share to a request if the request is authentic or sent by some party. But, we design a protocol where a party adds sign-share to a request only if the request is from a selected party for the particular view. Therefore, only the selected parties receive $\langle n-f \rangle$ replies with $sign-share$ and produce a $threshold-signature$ on a message. 

\subparagraph*{Construction of Prioritized-PB:} The pseudocode of the P-PB protocol is given in Algorithm \ref{PB:P-PB1}. We provide a flow of the protocol below: 

\begin{itemize}
    \item Upon the invocation of a P-PB protocol, a party  creates a message type $SEND$ using the $ID$ and the received value and proof $\langle v, \sigma \rangle$. Then the party multi-casts/broadcasts the created message. (lines 03-06)
    \item Upon receiving a $SEND$ type message from a party $p_k$, a party checks whether the sender is a selected party for the particular $view$. Then the party checks the value and proof. The checking depends on the view and the step of the P-PB (see algorithm \ref{algo: PB-messages}).
    \item  Upon receiving a $sign-share$ $\sigma_k$ from a party $p_k$, a party checks the authenticity of the message. If the sender is authentic, the party adds the $sign-share$ $\rho$ to its set $\Sigma$. (line 08-09)
    \item A selected sender waits for $\langle n-f \rangle$ valid $sign-shares$ and upon receiving the valid $sign-shares$, the party returns the $threshold-signature$ of the $sign-shares$. (lines 05-06)
\end{itemize}

\begin{algorithm}[hbt!]
\LinesNumbered
\DontPrintSemicolon
\SetAlgoNoEnd
\SetAlgoNoLine

\SetKwProg{LV}{Local variables initialization:}{}{}
\LV{}{
   $\Sigma \leftarrow \{\}$\;
}

\SetKwProg{un}{upon}{ do}{}
\un{$P-PB \langle ID,\langle v, \sigma \rangle\rangle $ invocation}
{
  $multi-cast\langle ID, SEND, \langle v, \sigma \rangle \rangle$ \;
  \textbf{wait} until $|\Sigma| = n-f$\;
  
  \KwRet $threshold-sign\langle \Sigma \rangle$\; 
}
\SetKwProg{un}{upon }{ do}{}

\un{receiving $\langle ID, ACK, \sigma_k \rangle$ from a party $p_{k}$ for the first time}
{
 \uIf{$share-validate (\langle ID, ACK \rangle, k, \sigma_k ) = true$ }{
    $\Sigma \leftarrow {\sigma_k \cup \Sigma}$}
}

\caption{P-PB: Protocol for party  $p_i$}\label{PB:P-PB1}
\end{algorithm}

\subparagraph*{Construction of Prioritized-PB (messages):} The pseudocode of the Provable-Broadcast (messages) is given in Algorithm \ref{algo: PB-messages}. The description of the protocol is given below:

\begin{itemize}
    \item Before each invocation, a party sets its stop value to false (it is waiting for messages). (line 01-02)
    \item Upon receiving a valid ($EX-PB-VAL$) message from a selected party for the first time($stop=false$), the party sets $stop=true$, $sign-share$ the ID and value, delivers the received proof, and replies to the sender with $sign-share$. (line 03-08)
    \item Upon receiving P-PB-abandon, a party sets its $stop=true$ and does not accept any P-PB messages. (line 09-08)
    \item In $EX-PB-VAL$, a party validates value v and the threshold-signature based on the step number and returns $true$ or $false$. (line 11-19)
    \item In check-prepare(v, prepare), a party first checks whether the value is externally valid. A party returns true when the value is externally-valid, $t_{sign}$ valid and the view is greater than $LOCK$. (line 20-31) 
\end{itemize}
\begin{algorithm}
\SetAlgoNoEnd
\SetAlgoNoLine
\DontPrintSemicolon

 \SetKwProg{LV}{Local variables initialization:}{}{}
\LV{}{
$ stop \leftarrow false $\;
}

\BlankLine
\BlankLine

\SetKwProg{un}{upon}{ do}{}
\un{receiving $\langle ID, SEND, \langle v, t_{sign} \rangle \rangle$ from the sender}{
  \If{$ ID.id \in PrioritizedParties \wedge stop = false \wedge EX-PB-VAL(ID, \langle v, t_{sign}\rangle)=true$}{
   $stop \leftarrow true$\;
   $\sigma_i \leftarrow share-sign_i(\langle ID,v\rangle)$\;
   $deliver\langle v, t_{sign}\rangle$\;
   $reply\langle ID, \sigma_i \rangle$\;
  }
}
\BlankLine
\BlankLine

\un{$P-P$B-abandon(ID)}{
   $stop \leftarrow true$\;
}
\BlankLine
\BlankLine

\SetKwProg{un}{procedure}{}{}

\un{EX-PB-VAL(ID, $\langle v, t_{sign}\rangle$)}{

    parse ID as $\langle \overline{ID}, step \rangle$\;

    \BlankLine
    \BlankLine
    
    parse $t_{sign}$ as $\langle prepare, ts_{in}\rangle$\;

   \If{step=1}{
     \If{check-prepare(v,prepare)}{
        \KwRet true\;
     }
   }
    
    \If{step>1 $\wedge $ threshold-validate($\langle\langle \overline{ID}, step-1 \rangle, v\rangle, t_{sign}$)}{
      \KwRet true\;
    }
    \KwRet false\;
}
\BlankLine
\un{$check-key(v, prepare)$ }
{
\If{EX-VABA-VAl(v) = false}{
    \KwRet false\;
}
  $party \leftarrow Party[view]$\;
  parse $prepare$ as $\langle v, \sigma \rangle$\;
  \If{$view \neq 1$}{
     \If{$threshold-validate \langle \langle \langle \langle id, Party[view], view\rangle,1\rangle, v\rangle,\sigma \rangle =false$}{\KwRet false\;}
  }
  \uIf{$view \geq LOCK$}{
    \KwRet true\;
  }\uElse{
   \KwRet false\;
  }
}
\caption{Provable-Broadcast with identification ID: messages.} \label{algo: PB-messages}
\end{algorithm}

\subsection{Propose-Suggest} The purpose of the Propose-Suggest step is to ensure that parties receive enough proof even though we reduce the number of broadcast. 
Since $(f+1)$ parties broadcast their requests and among them, one party can complete the proposal-promotion (since only one party can be honest), a party is required to act after receiving the first proposal (unlike VABA, where a party waits for $\langle n-f \rangle$ proposals). But a party also needs to know that at least $\langle f+1 \rangle$ honest parties receive at least one proposal before moving on to the proposal-selection. So, we introduce another step, $suggestion$, where each party suggests the received proposal and waits for $\langle 2f+1 \rangle$ suggestions (it indicates that at least $\langle f+1 \rangle$ honest parties have received a proposal). We also require to stop unnecessary message processing (If a party has already received $\langle 2f+1 \rangle$ suggestions, then the party does not require any promotion). The step is given in Algorithm \ref{algo:Efficient-VABA}.

\subsection{Party Selection} \label{proposal-selection}
The purpose of the party selection step is to choose a party from the committee members so all the parties can agree on the selected party's proposal. Therefore, after completing the Propose-Suggest step, a party moves to the next step, where honest parties coordinate to elect a leader. It is required to elect a leader from the committee members, but the standard leader election protocol elects any party among $n$ parties. Therefore, to elect a party from the committee members, our protocol first elects a leader using the leader election protocol, which can be any party from $\{1,2,..,n\}$. If the elected leader is a committee member, a party can deliver that party's deliverable. But if the elected leader is not a committee member, then we utilize a mapping function to map the elected leader to a selected party, and the mapping ensures that the same committee member is elected by all parties. The leader election and mapping are given in subsections \ref{lElec} and \ref{mSParty}.

\subsubsection{Leader Election} \label{lElec}

We utilize the standard leader election protocol to elect a leader. The Leader-Election primitive provides one operation to elect a unique party (called a leader) among the parties \cite{BYZ17}. The pseudocode is given in Algorithm \ref{algo:Proposal-selection}.

 \begin{algorithm}[hbt!]
\SetKwProg{LV}{Local variables initialization:}{}{}
\LV{}{
$\Sigma \leftarrow \{\}$\;
}

\LinesNumbered

\SetKwProg{Fn}{function}{}{}
\Fn{$elect\langle id, view \rangle$} 
{
 
  $\sigma _i$ $\leftarrow$ $CShare\langle id \rangle$ \;
  $multi-cast \langle SHARE, id, \sigma _i, view \rangle$\;
  wait until $|\Sigma| = f+1$\;
  
  \KwRet $CToss \langle id, \Sigma \rangle$\;

}
\SetKwProg{un}{upon receiving}{ do}{}
\un{$ \langle SHARE, id, \sigma _k, view \rangle$ from a party $p_{k}$ for the first time}
{
 \uIf{$CVerify\langle id_{k}, \sigma _k\rangle = true$ }{
    $\Sigma \leftarrow {\sigma _k \cup \Sigma}$}
}

\caption{Leader-election: Protocol for all parties}
\label{algo:Proposal-selection}
\end{algorithm}

\subsubsection{Mapping to a selected party} \label{mSParty}
Since the elected leader has a unique $id$, we take the distance of the leader from the committee members and pick the nearest selected party (smaller distance). 

\subparagraph*{Construction of mapToParty.} \label{CMapToParty}The pseudocode of the mapToParty protocol is given in algorithm \ref{algo:mapToParty}. The description of the protocol is given below:

\begin{itemize}
    \item Upon the invocation of a mapToParty protocol, a party computes distances with the elected leader from each prioritized party (difference between the leader $id$ and a prioritized party's $id$) and picks the nearest party. (lines 08-08)
    \item Upon computing the nearest party, the party returns the selected party's id. (line 09)
\end{itemize}

\begin{algorithm}[hbt!]
\SetAlgoNoEnd
\SetAlgoNoLine
\DontPrintSemicolon
\SetKwProg{un}{upon}{ do}{}

\un{$mapToParty\langle view, id, PrioritizedParties \rangle$ $invocation$} {
   
    $d \leftarrow \infty$\;
    $party \leftarrow \bot$\;

    \For{$p \in PrioritizedParties$ }{
    $dis \leftarrow |id - p_{id}|$\;
     \uIf{$dis< d$}{
        $d \leftarrow dis$\;
        $party \leftarrow p_{id}$\;
     }
    }
    \KwRet party\;
}

\caption{Mapping to a selected party: protocol for all parties}
 \label{algo:mapToParty}
\end{algorithm}

\subsection{View-Change}
The purpose of the view-change step is to complete one view and move to the next view. This step takes place when parties complete party-selection/proposal-selection step. Through view-change, parties share their deliverable for the selected party, which ensures that the highest deliverable of the selected party reaches the $f+1$ parties. After the view-change step, parties start the new view with proposal promotion. The pseudocode of the view change step is given in Algorithm \ref{algo:Efficient-VABA}.   

\subsection{Integration of sub-protocols} In this section, we provide construction of the main protocol that combines all the steps of the protocol. We first give the variable initialization. The second construction is the main protocol followed by the message handle of the protocol. 

\paragraph{Global variable initialization} Each party maintains a set of global variables that is accessible from each protocol. Algorithm \ref{algo:localV} provides the list of the global variables. 

\begin{algorithm}
\LinesNumbered
\textbf{Gocal variables initialization:}

$ LOCK \leftarrow 0$\;
$PREPARE \leftarrow \langle 0, v, \bot \rangle$ \;
$DECIDED \leftarrow false$ \;
for every view $\geq$1 initialize: \newline
$ Party[view] \leftarrow \bot$\newline
$ PPdone[view] \leftarrow \{\}$\newline
$ PPSkip[view] \leftarrow \bot$\newline
$skip = false$ \newline
\caption{Global variables initialization: for a party}
\label{algo:localV}
\end{algorithm}
\paragraph{Construction of Efficient VABA}\label{CEfficientVAAB} 
\begin{itemize}
    \item \textit{Committee-selection.} Every party computes the selected parties for the view using the function SelectCommittee (see Algorithm \ref{algo:cs}) protocol.  (line 04)
    \item \textit{Proposal-promotion.} If a party finds itself in the prioritizedParties, the party generates ID for the current view. If the view is greater than one and the last view has not been completed, then the party adopts the last view's prepare and proof. Otherwise the party adds new requests in the PREPARE's value. (line 06 - 12)
    \item \textit{Proposal-promotion.} Then the party promotes its requests using the $promote$ protocol (see Algorithm \ref{alg:Proposal-Promotion}). This protocol invokes P-PB sub-protocol four times and finally returns a $threshold-signature$. (line 13)
    \item \textit{Proposal-suggestion.} Upon return from the promote, a selected party multi-casts  the $proposal$ and $t_{sign}$. The party also multi-casts the proposal as a suggestion and sets its suggest variable $true$ to indicate that the party is not going to suggest any other proposal. (line 17-20 )
     \item \textit{Proposal-suggestion.} If a non selected party has not suggested any proposal yet and has received a proposal or suggestion,the party multi-casts a suggestion. After sending a suggestion, a party waits for  $\langle n-f \rangle$ suggestions. (line 21-25)
    \item \textit{Proposal-selection.} When a party receives $\langle n-f \rangle$ suggestions, the party multi-casts a $DONE$ message and waits for $skip=true$. A party handles a $DONE$ message similar to the VABA (see Algorithm \ref{alg:E-VABA(messages)}). (line 30 - 33) 
    \item \textit{Proposal-selection.} Upon the $skip= true$, a party abandons all the messages related to promotion. Then the party invokes $elect\langle id, view \rangle$ (see Algorithm \ref{algo:Proposal-selection}) that elects a party and returns the elected party's id. If the elected party does not belong to the committee members, the party maps the elected party to a committee member (see Algorithm \ref{algo:mapToParty}). (line 29-34)
    \item \textit{View-change} Each party multi-casts the selected party's delivery via $VIEW-CHANGE$ step. Upon receiving $\langle n-f \rangle$  $VIEW-CHANGE$e messages, a party decides or adopts a $prepare$ $\langle 0, v, t_{sign} \rangle$ and moves to the next view. (line 35-37)
\end{itemize}

\begin{algorithm}[hbt!]
\DontPrintSemicolon
\SetAlgoNoEnd
\SetAlgoNoLine
\SetKwProg{un}{upon}{ do}{}

  
$view \leftarrow 1$ \;
\While{true}{
   $suggest \leftarrow false$\;
    $PrioritizedParties \leftarrow SelectCommittee\langle id, view \rangle$\;
   \If{$id_{p_i} \in PrioritizedParties$}{
       $ID \leftarrow {\langle view, id_{p_i}\rangle}$\;
       \uIf{view>1 and DECIDED = false }{
           $prepare \leftarrow \langle PREPARE.view, PREPARE.proof\rangle$\;
          
       }\Else{
          $PREPARE.view = view$\;
          $PREPARE.value = requests$\;
          $prepare \leftarrow \langle PREPARE.view, PREPARE.proof\rangle$\;
          
       }
           \textbf{invoke} $promote\langle ID,\langle PREPARE.value, prepare\rangle\rangle$\;
      \;
      \;
     
    }
  $ DECIDED \leftarrow false$ \;
   \un{return from promote}{
      multi-cast $\langle proposal, t_{sign} \rangle$ \;
      
      multi-cast $\langle suggestion, t_{sign} \rangle$ \; 
      suggest = true\;
   }

   \SetKwProg{un}{upon receiving}{ do}{}

   \un{ a proposal or suggestion }{
        
     \If{suggest = false}{
        multi-cast $\langle suggestion, t_{sign} \rangle$ \;
        $suggest \leftarrow true$ \;
    }
   }
      
    \un{ $\langle n-f \rangle$ suggestions}{
         \If{skip[view] = false}{
            $multi-cast\langle id, DONE, view, PREPARE.value, t_{sign} \rangle$ \;
         }  
  }
    
 \textbf{wait until} skip=true\;

 \SetKwProg{un}{upon}{ do}{}
   \un{skip = true}{
    \For{$id \in PrioritizedParties$ }{
          $abandon\langle id, view\rangle$\;
   }

    $electedParty \leftarrow elect (id, view)$\;
     \If{$electedParty \notin PrioritizedParties$}{
        $electedParty \leftarrow mapToParty(view, electedParty)$\;
     }

   $multi-cast\langle id, VIEW-CHANGE, view, getPrepare\langle ID\rangle, getLock\langle ID\rangle, getCommit\langle ID\rangle$\;
   \textbf{wait} for $\langle n-f \rangle$ VIEW-CHANGE messages\;
   $view \leftarrow view+1$\;
    }

}
\caption{Efficient-VABA: protocol for party $p_i$}
\label{algo:Efficient-VABA}
\end{algorithm}

\paragraph*{Construction of Efficient-VABA (messages):} The proposed Efficient-VABA protocol deals the messages similar to the base protocol. The pseudocode of the Efficient-VABA (messages) is given in Algorithm \ref{alg:E-VABA(messages)}. The description of the protocol is given below:

\begin{itemize}
    \item  Upon receiving a $DONE$ message, a party validates the threshold signature of the accompanying proposal (the proposal has completed all four steps) and increments the PPdone[view]. If the party receives $\langle n-f \rangle$ $DONE$ messages, then the party generates a sign-share of the $SKIP$ messages and multi-casts a $SKIP-SHARE$ message. (line 01-07)
    \item When a party receives a valid $SKIP-SHARE$ message, the party adds the $sign-share$ to the set PPskip[view]. The party also sets its skip[view] true (since already $\langle f+1 \rangle$ honest parties have broadcast the $DONE$ messages). Upon receiving $\langle n-f \rangle$ $SKIP-SHARE$ messages, a party threshold-signs on the $SKIP-SHARE$ message and multi-casts the $SKIP$ messages. (line 08-14)
    \item  When a party receives a valid $SKIP$ message, the party sets its skip[view]=true and multi-casts a $SKIP$ message if it does not have done so already. (line 15-19)
    \item Through view-change message, a party delivers the elected party's delivery. If the party has a delivery for the fourth step ($v_4$), the party decide on the value and set the $DECIDED=true$. If the party has delivery for the third step ($v_3$) and the view is greater than the existing lock, then the party changes its $lock$ variable.  If the party has delivery for the second step ($v_2$) and the view is greater than the existing $prepare.view$, then the party changes its $PREPARE$ variable. (line 20-30)
\end{itemize}

\begin{algorithm}
\SetAlgoNoEnd
\SetAlgoNoLine
\DontPrintSemicolon
\SetKwProg{un}{upon}{ do}{}

\un{$receiving \langle id, DONE, view,v, t_{sign} \rangle$ from party $p_k$ for the first time in view $view$}{
   \If{threshold-validate$\langle \langle \langle \langle id, k, view\rangle, 4 \rangle v \rangle t_{sign}\rangle$}{
   $PPdone[view] \leftarrow PPdone[view] + 1$\;
       \If{ SKIP-SHARE message was not sent yet in view $view$}{
          \If{$|PPdone[view]|=n-f$}{
          $\sigma \leftarrow sign-share \langle id, SKIP, view \rangle$\;
          $multi-cast\langle id, SKIP-SHARE, view, \sigma \rangle$\;
        }
       }  
     }
}

\BlankLine
\BlankLine
\un{$receiving \langle id, SKIP-SHARE, view,\rho \rangle$ from party $p_k$ for the first time in view $view$}{
   \If{share-validate$\langle \langle id, SKIP, view\rangle, k, \rho \rangle$}{
   $PPskip[view] \leftarrow PPskip[view] \cup \{\sigma \}$\;
        $skip[view] \leftarrow true $\;
        \If{$|PPskip[view]|=n-f$}{
          $t_{sign} \leftarrow threshold-sign \langle PPskip[view]\rangle$\;
          $multi-cast\langle id, skip, view, t_{sign}\rangle$\;
        }
     }
}

\BlankLine
\BlankLine
\un{$receiving \langle id, SKIP, view, t_{sign} \rangle$}{
   \If{threshold-validate$\langle \langle id, skip, view\rangle, t_{sign} \rangle$}{
        $skip[view] \leftarrow true $\;
        \If{\textit{SKIP message was not sent yet in view $view$}}{
        $multi-cast\langle id, skip, view, t_{sign} \rangle$\;
        }
     }
}

\SetKwProg{un}{upon}{ do}{}
\un{$receiving \langle id, view-change, view, \langle v_2, t_2\rangle, \langle v_3, t_3\rangle, \langle v_4, t_4 \rangle$ }
{
  $party \leftarrow Party[view]$\;
  
  \If{$v_4 \neq \bot$}{
     \If{threshold-validate$\langle \langle \langle \langle id, party, view\rangle, 3 \rangle, v_4\rangle, t_4)\rangle$}{
        decide $v_4$\;
        $DECIDED \leftarrow true$\;
     }
  }
  \If{$v_3 \neq \bot \wedge view > lock $}{
     \If{threshold-validate($\langle \langle \langle \langle id, party, view\langle,2\rangle,v_3\rangle,t_3\rangle$)}{
        $ LOCK \leftarrow view$\;
     }
  }
  
  \If{$v_2 \neq \bot \wedge view > prepare.view $}{
     \If{threshold-validate$\langle \langle \langle \langle id, party, view\rangle , 1\rangle, v_2\rangle , t_2\rangle)$}{
        $ PREPARE \leftarrow \langle view, v_{2}, t_{2}\rangle$\;
     }
  } 
}

\caption{Efficient-VABA with identification ID: messages.}
\label{alg:E-VABA(messages)}
\end{algorithm}

\section{Security and Efficiency Analysis}
\subsection{Security Analysis}
To analyze the security properties of the protocol, we follow the approach from VABA \cite{BYZ17}. We only state and change the lemma definitions to match the proposed protocol requirements and skip the proofs if it follows the VABA.

\paragraph{Prioritized provable-braodcast} The proposed P-PB protocol satisfies the P-PB-integrity, P-PB-validity, P-PB-abandonability and P-PB linear complexity properties from the code. (similar to provable-broadcast)
\newline 
The P-PB protocol satisfies the termination and provability properties from the provable-broadcast.

\begin{lemma} \label{E-VABA: selected}
    Algorithm \ref{P-PB-D} satisfies the selected property.
\end{lemma}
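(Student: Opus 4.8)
The plan is to prove the PB-Selected property by a direct inspection of the receiver logic, exploiting the fact that an honest party performs delivery at exactly one place in the code, and that this place is protected by an explicit membership test against the committee. Concretely, the property to establish is: whenever an honest $p_i$ executes $deliver$ on a message $m$, the proposer encoded in $m$ is a member of the selected committee.

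First I would argue that the only statement at which an honest party invokes $deliver$ is inside the handler for an incoming $\langle ID, SEND, \langle v, t_{sign}\rangle\rangle$ message in Algorithm \ref{algo: PB-messages} (the $deliver\langle v, t_{sign}\rangle$ step in lines 03--08). That statement is reached only when the guard $ID.id \in PrioritizedParties \wedge stop = false \wedge EX\text{-}PB\text{-}VAL(ID,\langle v, t_{sign}\rangle)=true$ holds, so in particular the conjunct $ID.id \in PrioritizedParties$ must be satisfied. Since the identifier is built in the main protocol (Algorithm \ref{algo:Efficient-VABA}) as $ID=\langle view, id_{p_j}\rangle$ for the proposing party $p_j$, the field $ID.id$ encodes the identity to which the broadcast is attributed; hence an honest party delivers $m$ only if this proposer lies in $PrioritizedParties$, i.e., in the committee $C$ returned by $SelectCommittee$. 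To make the phrase ``proposed by a selected party'' meaningful across parties, I would then invoke the Agreement property of the $(1,\kappa,\epsilon)$-Committee Selection: any two honest parties output the same $C$, so every honest party evaluates its deliver guard against the same $PrioritizedParties=C$, and the set of selected parties is consistent for all of them.

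The main obstacle I expect is ruling out that a non-selected, possibly Byzantine, party triggers delivery of a proposal it did not legitimately originate, for instance by relaying a $SEND$ that carries a committee member's $ID$. Here I would appeal to the authenticated and reliable channel assumption together with the binding of sign-shares to the identifier: an honest responder replies with $\sigma_i \leftarrow share\text{-}sign_i(\langle ID, v\rangle)$, so any threshold-signature that can subsequently be assembled is bound to the proposer encoded in $ID$ rather than to the relaying party, and the guard itself never lets $ID.id$ point outside $C$. Thus even a relayed message is, by the very statement of PB-Selected, attributed to a selected party, and no honest party ever delivers a message whose $ID$ falls outside the committee. Combining the single-site deliver guard with committee Agreement then yields the claim except with negligible probability, the negligible slack arising only from the cryptographic guarantees inherited from the committee selection and threshold-signature schemes.
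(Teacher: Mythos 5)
Your proposal is correct, and it hinges on the same code-level fact as the paper's proof --- the guard in Algorithm \ref{algo: PB-messages} under which an honest party acts on a $SEND$ message only if $ID.id \in PrioritizedParties$ --- but you derive the property by a genuinely different route. The paper argues through PB-Provability: completing a broadcast means assembling a threshold-signature, which requires sign-shares from at least $f+1$ honest parties, and since honest parties issue sign-shares only to selected senders, only committee members can ever complete the promotion. You instead inspect the unique deliver site directly: delivery is syntactically protected by the membership test, so any message an honest party delivers is attributed to a committee member. Your route proves PB-Selected exactly as stated (a claim about \emph{delivery}), whereas the paper's route establishes the operationally stronger consequence the rest of the protocol actually consumes (a non-selected party cannot obtain a completion proof, hence cannot pass the threshold-validate checks in the $DONE$ and view-change handlers), while leaving the literal delivery claim implicit. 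Your two supplementary steps also fill gaps the paper's terse proof glosses over: invoking the Agreement property of committee selection so that $PrioritizedParties$ denotes the same set $C$ at every honest party (without this, ``selected party'' is not even well-defined across parties), and ruling out impersonation by a non-selected relay via the authenticated channels together with the binding of $share\text{-}sign_i(\langle ID, v\rangle)$ to the identifier. The trade-off is that your argument, taken alone, does not yet state the fact used downstream --- that no valid threshold-signature exists for a non-selected $ID$ --- though your observation that every sign-share is bound to $ID$ gets you there in one additional sentence, essentially reconstructing the paper's provability argument as a corollary.
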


\begin{proof}
   From the provability property of the protocol, a party generates a threshold-signature as proof for its broadcast. A party requires at least $\langle f+1 \rangle$ honest parties' sign-shares on the promoted message to generate a threshold-signature. From line 4 of Algorithm \ref{algo: PB-messages}, an honest party replies with a sign-share only if the sender is selected for the view. Therefore, only the selected parties can complete the promotion.
\end{proof}

\begin{lemma} \label{lem:l4}
    Consider a 4-step P-PB with identification id. For every value $v$ and $j\in\{2,3,4\}$, if some honest party gets a $\sigma$ such that: threshold-validate($\langle\langle\langle id,j\rangle,v\rangle, \sigma\rangle$) = true, then at least f+1 honest parties previously $deliver_j(id, \langle v, \sigma' \rangle)$ for some $\sigma'$.
\end{lemma}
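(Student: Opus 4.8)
The plan is to combine the unforgeability of the $\langle n, t, f\rangle$ threshold-signature scheme with a direct reading of the message-handling code in Algorithm \ref{algo: PB-messages}, via the standard quorum-counting argument. The single fact that does the real work is that, in the code, an honest party \emph{delivers} the step-$j$ message \emph{before} it replies with its sign-share; hence any honest sign-share on $\langle\langle id, j\rangle, v\rangle$ witnesses a prior $deliver_j(id, \langle v, \sigma'\rangle)$. Counting how many such honest sign-shares must underlie a valid threshold-signature then yields the claimed $f+1$ parties.

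First I would fix $j \in \{2,3,4\}$ and suppose some honest party holds $\sigma$ with $threshold\text{-}validate(\langle\langle\langle id, j\rangle, v\rangle, \sigma\rangle) = true$. Since the sender in Algorithm \ref{PB:P-PB1} returns a threshold-signature only after collecting $\Sigma$ with $|\Sigma| = n-f$, the scheme's threshold is $t = n-f$; by unforgeability, $\sigma$ can be valid only if at least $t = n-f$ distinct parties each produced a valid sign-share $share\text{-}sign(\langle \langle id, j\rangle, v\rangle)$ on this exact message. Next I would split these $n-f$ signers by corruption: at most $f$ of them are Byzantine, so at least $n - f - f = n - 2f$ are honest, and since $n \ge 3f+1$ we have $n - 2f \ge f+1$. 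Each such honest signer, by the code guarding the reply (it sets $stop \leftarrow true$, computes $\sigma_i \leftarrow share\text{-}sign_i(\langle ID, v\rangle)$, executes $deliver\langle v, t_{sign}\rangle$, and only then replies), performed $deliver_j(id, \langle v, \sigma'\rangle)$ with the same value $v$ before emitting its share, where $\sigma'$ is the step-$(j-1)$ proof it validated. The $stop$ flag ensures each honest party signs at most once per step, so these $\ge f+1$ honest signers are distinct parties, each having delivered $\langle v, \sigma'\rangle$ in step $j$, which is the claim.

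The main obstacle is tying the threshold-signature back to honest deliveries rather than to arbitrary adversarial shares: the adversary can itself generate up to $f$ valid sign-shares, so one cannot conclude that all $n-f$ contributors delivered. The counting step above resolves this, but it requires two code-level invariants that I would verify explicitly: (i) the value $v$ appearing in each honest share is exactly the value delivered (immediate, since both are read from the same received $SEND$ message), and (ii) an honest party contributes at most one share per step, so the honest contributors are genuinely $\ge f+1$ distinct parties and not repeated signatures from a single party. Granting these, the lemma follows from unforgeability together with $n - 2f \ge f+1$; the argument is essentially the corresponding provability lemma of VABA \cite{BYZ17}, adapted to the per-step indexing of the four-step P-PB.
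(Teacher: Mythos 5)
Your proof is correct and matches the paper's approach in substance: the paper disposes of this lemma in one line by citing the P-PB provability property (item (ii) of PB-Provability), and your quorum-counting argument---a valid threshold signature needs $t = n-f$ shares by unforgeability, at most $f$ from Byzantine parties, so at least $n-2f \geq f+1$ distinct honest signers, each of whom by the code of Algorithm \ref{algo: PB-messages} delivers before replying---is exactly the standard VABA argument that underlies that property. You have simply unfolded the citation into its proof, including the correct observations that delivery precedes the sign-share reply and that the $stop$ flag guarantees one share per honest party per step.
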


\begin{proof}
    The proof follows from the P-PB provability property (similar to PB-Provability).
\end{proof}
\begin{lemma} \label{lem:l5}
    Consider a 4-step P-PB with identification id. For every value $v$ and $j\in\{2,3,4\}$, if some honest party $deliver_j(id, \langle v, \sigma \rangle)$, then threshold-validate($\langle\langle\langle id,j-1\rangle,v\rangle, \sigma_{in}\rangle$) = true.
\end{lemma}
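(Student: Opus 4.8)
The plan is to derive the claim directly from the external-validity predicate \textit{EX-PB-VAL} evaluated inside the P-PB message handler (Algorithm \ref{algo: PB-messages}), since this lemma is the inverse companion of Lemma \ref{lem:l4}: Lemma \ref{lem:l4} extracts deliveries from a signature via provability, whereas here I extract a valid predecessor signature from a single delivery via the validity check. First I would isolate the only code path through which an honest party can deliver at step $j$. Inspecting Algorithm \ref{algo: PB-messages}, the sole \textit{deliver} instruction fires inside the handler for a received $\langle ID, SEND, \langle v, t_{sign}\rangle\rangle$ message, and it executes only after the guard \textit{EX-PB-VAL}$(ID, \langle v, t_{sign}\rangle) = true$ succeeds. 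Hence whenever an honest party performs $deliver_j(id, \langle v, \sigma\rangle)$, the predicate must have returned $true$ on exactly that $\langle v, \sigma\rangle$ with $ID = \langle id, j\rangle$.

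Next I would expand \textit{EX-PB-VAL} for the case $j \in \{2,3,4\}$, i.e.\ $step = j > 1$. After parsing $ID$ as $\langle \overline{ID}, step\rangle$ with $\overline{ID} = id$, and parsing the proof as $\langle prepare, \sigma_{in}\rangle$, the $step = 1$ branch is skipped, so the procedure can return $true$ only through the $step > 1$ branch, whose guard is threshold-validate$(\langle\langle \overline{ID}, step-1\rangle, v\rangle, \sigma_{in}) = true$. Substituting $\overline{ID} = id$ and $step = j$ yields precisely threshold-validate$(\langle\langle\langle id, j-1\rangle, v\rangle, \sigma_{in}\rangle) = true$, which is the claim. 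Crucially, the value $v$ checked against the predecessor signature is identical to the $v$ that is delivered, because both are the same field of the single $SEND$ message the handler is processing; therefore no value mismatch between the two steps can occur.

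The argument is essentially a read-off of the guard, so I do not expect a deep obstacle; the only care needed is the bookkeeping of the nested identifiers and the $\langle prepare, \sigma_{in}\rangle$ decomposition, so that the object fed to threshold-validate in the code is correctly identified with the $\sigma_{in}$ named in the statement. I would close by noting that Lemmas \ref{lem:l4} and \ref{lem:l5} combine into the chaining invariant underlying safety: a valid step-$j$ signature forces $f+1$ honest deliveries at step $j$ (Lemma \ref{lem:l4}), and each such delivery certifies a valid step-$(j-1)$ signature on the same value (Lemma \ref{lem:l5}); iterating down to $j=2$ shows that a completed four-step promotion back-propagates $f+1$ honest deliveries at every earlier step on one agreed value.
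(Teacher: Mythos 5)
Your proof is correct and matches the intended argument: the paper itself gives no proof of this lemma (deferring to VABA under its stated convention of skipping proofs that carry over), and the carried-over argument is exactly your read-off of the \textit{EX-PB-VAL} guard in Algorithm \ref{algo: PB-messages}, since the sole \textit{deliver} instruction for step $j>1$ is gated by \textit{threshold-validate} succeeding on $\langle\langle id, j-1\rangle, v\rangle$ with the same $v$ that is delivered. Your bookkeeping of the $\langle prepare, \sigma_{in}\rangle$ decomposition is also right, and in fact quietly repairs a typo in the pseudocode, which passes $t_{sign}$ rather than the parsed inner signature $ts_{in}$ to \textit{threshold-validate} in the $step>1$ branch.
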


\begin{lemma}\label{lem:l6}
    Consider a 4-step P-PB with identification id, two honest parties $p_1$, $p_2$, and two values $v_1$, $v_2$. For every $j_1$, $j_2$ $\in$ $\{2,3,4\}$, if $p_1$ gets $\sigma_1$ such that threshold-validate($\langle\langle\langle id,j_1\rangle,v_1\rangle, \sigma_{1}\rangle$) = true and $p_2$ gets $\sigma_2$ such that threshold-validate($\langle\langle\langle id,j_2\rangle,v_2\rangle, \sigma_{2}\rangle$) = true, then $v_1=v_2$.
\end{lemma}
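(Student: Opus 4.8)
The plan is to reduce both certificates to a common step index and then invoke a quorum-intersection argument there. First I would assume without loss of generality that $j_1 \le j_2$ and use Lemmas \ref{lem:l4} and \ref{lem:l5} to propagate $p_2$'s certificate downward in the step index until it matches $p_1$'s. Concretely, a valid $\sigma$ at step $j$ on value $v$ yields, by Lemma \ref{lem:l4}, at least $f+1$ honest parties that $deliver_j(id,\langle v,\cdot\rangle)$; picking any one of them and applying Lemma \ref{lem:l5} produces a valid threshold signature on $\langle\langle id, j-1\rangle, v\rangle$. Iterating this pair of implications from $j_2$ down to $j_1$ (every index stays in $\{2,3,4\}$, so both lemmas apply at each step) shows that $v_2$ is threshold-valid at step $j_1$ as well. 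At this point both $v_1$ and $v_2$ carry a valid threshold signature for the \emph{same} step $j_1$.

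Next I would argue that at any fixed step at most one value can carry a valid threshold signature. By unforgeability of the threshold scheme (with threshold $t=n-f$, since the sender waits for $|\Sigma| = n-f$ before calling $threshold\text{-}sign$ in Algorithm \ref{PB:P-PB1}), a valid signature on $\langle\langle id, j_1\rangle, v_1\rangle$ certifies that at least $n-f$ distinct parties contributed a sign-share for $v_1$ at step $j_1$, and likewise $n-f$ parties signed $v_2$. Two sets of size $n-f$ overlap in at least $n-2f \ge f+1$ parties, so at least one \emph{honest} party contributed a sign-share for both $v_1$ and $v_2$ at step $j_1$. But in Algorithm \ref{algo: PB-messages} an honest party sets $stop \leftarrow true$ immediately after emitting its single $reply\langle ID, \sigma_i\rangle$ for that instance, so it signs at most one value per step. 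This contradiction forces $v_1 = v_2$.

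The main obstacle I anticipate is this second step: pinning down precisely that an honest party ACKs at most once per P-PB instance-step, and that this together with the $n-2f \ge f+1$ quorum-intersection bound rules out two distinct certified values. The downward chaining via Lemmas \ref{lem:l4} and \ref{lem:l5} is essentially bookkeeping once the base case is secured, and the whole argument mirrors part (i) of the PB-Provability property lifted across the four sequential steps, exactly as in the VABA analysis \cite{BYZ17}.
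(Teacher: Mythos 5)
Your proof is correct and takes essentially the same route as the paper's (which states the lemma and defers the argument to the VABA analysis \cite{BYZ17}): assume WLOG $j_1 \le j_2$, chain the step-$j_2$ certificate down to step $j_1$ by alternating Lemmas \ref{lem:l4} and \ref{lem:l5}, then conclude uniqueness of the certified value at the common step. The only difference is cosmetic: where the intended proof cites part (i) of the PB-Provability property as a black box, you re-derive it inline via quorum intersection (two $n-f$ quorums share at least $n-2f \ge f+1$ parties, hence an honest party, who signs at most one value per P-PB instance because $stop$ is set upon its single reply), which is exactly the standard proof of that property.
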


\begin{lemma}\label{lem:l7}
    Consider a 4-step P-PB with identification id. If the sender is honest, no honest parties invoke abandon(id), all messages among honest parties arrive, and the message $m$ that is being broadcast is externally valid, then the broadcast completes and returns a completion proof to the sender.
\end{lemma}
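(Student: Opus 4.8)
The plan is to reduce the statement to the P-PB-Termination property, which the excerpt already grants (``The P-PB protocol satisfies the termination and provability properties from the provable-broadcast''), and to apply it sequentially along the four steps of the promotion. Recall from Algorithm~\ref{alg:Proposal-Promotion} that a 4-step P-PB is exactly the sequential composition of four single P-PB invocations, in which the threshold-signature returned by step $j-1$ is supplied as the proof argument $t_{sign}$ of step $j$. I would therefore argue by induction on the step index $j\in\{1,2,3,4\}$ that each single invocation completes and returns a valid threshold-signature to the sender; the output of step $4$ is then the desired completion proof.

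Before the induction I would discharge the two receiver-side guards that are independent of the step. First, the sender invokes \textit{promote} only when it lies in \textit{PrioritizedParties} (see the main loop of Algorithm~\ref{algo:Efficient-VABA}); since the sender is honest and, by the Agreement property of committee selection, every honest party holds the identical committee set, the membership test $ID.id\in \textit{PrioritizedParties}$ in the SEND-handling clause of Algorithm~\ref{algo: PB-messages} succeeds at every honest receiver. This is exactly the extra hypothesis that P-PB needs beyond the ordinary provable-broadcast. Second, each fresh P-PB invocation initializes $stop=\mathrm{false}$, and by hypothesis no honest party invokes abandon, so the $stop=\mathrm{false}$ conjunct of that guard also holds when the SEND message arrives.

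For the base case $j=1$, the externally-valid hypothesis on $m$ makes \textit{EX-PB-VAL} (which for step $1$ reduces to \textit{check-prepare}) return true at every honest receiver; together with the two guards of the previous paragraph and the standing assumptions (honest sender, no abandon, timely delivery), all hypotheses of P-PB-Termination are met, so step $1$ completes and the sender obtains $\sigma_1$ with \textit{threshold-validate}$(\langle\langle id,1\rangle,v\rangle,\sigma_1)=\mathrm{true}$. For the inductive step, suppose step $j-1$ returned such a $\sigma_{j-1}$. The step-$j$ SEND message carries $\langle v,\sigma_{j-1}\rangle$, and the $step>1$ branch of \textit{EX-PB-VAL} accepts it precisely because \textit{threshold-validate}$(\langle\langle id,j-1\rangle,v\rangle,\sigma_{j-1})=\mathrm{true}$ by the inductive hypothesis; thus the step-$j$ message is externally valid at every honest receiver and P-PB-Termination again yields $\sigma_j$. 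Instantiating at $j=4$ shows the broadcast completes and returns $\sigma_4$ as the completion proof to the sender.

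The main obstacle I expect lies not in the sequential bookkeeping but in the base case: \textit{check-prepare}/\textit{check-key} additionally guards on $view\geq LOCK$ (and, for $view\neq 1$, on a threshold-validate of the carried key), so ``externally valid'' cannot be read as a purely global predicate---it depends on each honest receiver's local $LOCK$. To close this gap I would invoke the surrounding VABA view-structure: an honest party only raises $LOCK$ to the view of a key it has locked, and the value the honest sender carries is either a fresh request in the current view or the key adopted at view-change, so the current view is never strictly below an honest receiver's $LOCK$ for the key being promoted. This is precisely the step where the argument must appeal to the view invariants of the full protocol rather than to the P-PB primitive in isolation.
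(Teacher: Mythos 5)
Your proof is correct, and it follows essentially the same route as the argument this lemma actually rests on: the paper gives no proof of Lemma \ref{lem:l7} at all (its security section announces that proofs are skipped when they ``follow the VABA''), and the deferred VABA proof is exactly your induction over the four sequential P-PB invocations, where the threshold signature returned by step $j-1$ discharges the $step>1$ branch of \textit{EX-PB-VAL} at step $j$ and the single-instance termination property closes each invocation. The one genuinely new ingredient the adapted lemma needs --- that the guard $ID.id \in PrioritizedParties$ in Algorithm \ref{algo: PB-messages} passes at every honest receiver, which you obtain from the honest sender's committee membership plus the Agreement property of committee selection --- is something the paper never spells out, and you discharge it correctly. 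One remark on the division of labor: the $view \geq LOCK$ subtlety you flag in the base case is not an obligation inside this lemma, since ``externally valid'' is a hypothesis here; in the paper's decomposition that burden is carried separately by Lemma \ref{lem:l15} (via Lemma \ref{lem:l14}) before Lemma \ref{lem:l17} combines it with the present lemma, which is precisely the resolution your final paragraph sketches.
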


\begin{lemma} \label{lem:l8}
    Consider a view $j$ and let $p_l$ be the chosen leader of view $j$. For every $i$ $\in$ $\{2,3\}$, if an honest party gets a $view-change$ message that include $\langle v, \sigma \rangle$ such that threshold-validate($\langle\langle\langle id,l,j\rangle,i\rangle, v\rangle, \sigma$) = true, then all honest parties get a $view-change$ message that includes $\langle v, \sigma'\rangle$ such that threshold-validate($\langle\langle\langle id,l,j\rangle,i-1\rangle, v\rangle, \sigma'$) = true.
\end{lemma}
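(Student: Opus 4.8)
The plan is to reduce the claim to the provability chain of the four-step P-PB together with a quorum-intersection argument over the view-change broadcasts. First I would fix an honest party $p$ that receives a view-change message carrying $\langle v,\sigma\rangle$ with threshold-validate($\langle\langle\langle id,l,j\rangle,i\rangle, v\rangle, \sigma$) = true for some $i\in\{2,3\}$. The existence of this valid step-$i$ threshold-signature is exactly the hypothesis of Lemma~\ref{lem:l4} (applied with step index $i\in\{2,3\}\subseteq\{2,3,4\}$), so I would invoke it to conclude that at least $f+1$ honest parties previously executed $deliver_i(id,\langle v,\sigma''\rangle)$ for some $\sigma''$. Call this set of honest parties $H$.

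Next I would propagate the delivery one step back using Lemma~\ref{lem:l5}: for each party in $H$, because it delivered at step $i$, it holds a signature $\sigma'$ with threshold-validate($\langle\langle\langle id,l,j\rangle,i-1\rangle, v\rangle, \sigma'$) = true. By the receiver code of proposal-promotion (Algorithm~\ref{algo:PP(receiver)}), this step-$(i-1)$ proof is retained in exactly one of the party's prepare/lock/commit variables, and during the view-change step every honest party multi-casts all three deliverables via getPrepare, getLock, and getCommit. Hence every party in $H$ includes a valid step-$(i-1)$ proof for $v$ in the view-change message it broadcasts. Lemma~\ref{lem:l6} additionally guarantees that the value carried by any valid signature for this $id$ is uniquely $v$, so there is no ambiguity about which value is disseminated.

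The final and most delicate step is the dissemination argument. Each honest party waits for $n-f$ view-change messages before advancing the view. Since $|H|\ge f+1$ and $(n-f)+(f+1)=n+1>n$, any set of $n-f$ distinct senders must intersect $H$; equivalently, because $n-f=2f+1$ strictly exceeds the number $2f$ of parties outside $H$, no quorum of $n-f$ messages can avoid $H$. Consequently, every honest party's collected view-change quorum includes at least one message from a member of $H$, which by the previous paragraph contains $\langle v,\sigma'\rangle$ with threshold-validate at step $i-1$ equal to true. This establishes the conclusion for every honest party.

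I expect the main obstacle to be making the intersection argument airtight in the asynchronous model: I must confirm that the $f+1$ honest deliverers actually reach the view-change broadcast with their step-$(i-1)$ proofs still attached (which follows from the receiver code retaining the proof), and that a Byzantine coalition cannot steer an honest party's $n-f$ wait onto a quorum that dodges $H$ --- this is ruled out purely by the counting $n-f > n-(f+1)$. A secondary point I would reconcile before concluding is the indexing alignment between the lemma's step labels in $\{2,3\}$ and the off-by-one indexing used in the view-change handler of Algorithm~\ref{alg:E-VABA(messages)}.
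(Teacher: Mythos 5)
Your proof is correct and follows essentially the same route as the paper's, which defers this lemma to the VABA proof: apply the provability property (Lemma~\ref{lem:l4}) to obtain $f+1$ honest parties that delivered at step $i$, step back via Lemma~\ref{lem:l5} to the retained step-$(i-1)$ proof in the corresponding prepare/lock/commit variable, and conclude by quorum intersection since any $n-f$ view-change messages must include one from those $f+1$ honest parties. Your closing remarks on the off-by-one indexing and on why the counting $n-f > n-(f+1)$ defeats any Byzantine scheduling are exactly the right points to verify, and the invocation of Lemma~\ref{lem:l6} is harmless though not strictly needed.
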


\begin{lemma}\label{lem:l9}
   If a party $p$ decides on a value $v$ in a view $j$, then the $LOCK$ the variables of all honest parties are at least $j$ when they move to view $j+1$.
\end{lemma}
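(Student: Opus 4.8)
The plan is to trace the decision backwards to a quorum of $f+1$ honest parties that locked on $v$ during view $j$, and then apply a quorum-intersection argument at the view-change of view $j$ to force every honest party to install $LOCK = j$ before it advances. This is essentially the standard safety-via-locking argument of VABA, adapted to the fact that here a party decides on the commit value when it validates a \emph{step-3} proof rather than a step-4 proof, so the index bookkeeping has to be tracked carefully.

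First I would unfold the decision rule in Algorithm \ref{alg:E-VABA(messages)}: party $p$ decides $v$ in view $j$ only upon receiving a view-change message whose commit field equals $v$ and whose accompanying proof $t_4$ passes threshold-validate on $\langle\langle\langle\langle id, party, j\rangle, 3\rangle, v\rangle, t_4\rangle$; that is, a valid step-$3$ threshold-signature on $v$ exists for the elected leader $party = Party[j]$. Applying Lemma \ref{lem:l4} with P-PB step index $3$, at least $f+1$ honest parties previously executed $deliver_3(id, \langle v, \sigma'\rangle)$. By the receiver code (Algorithm \ref{algo:PP(receiver)}), delivering at step $3$ sets each such party's local $lock$ variable to $\langle v, \sigma'\rangle$, and by the EX-PB-VAL check in the P-PB message handler the stored $\sigma'$ is precisely a valid step-$2$ threshold-signature on $v$ keyed to the leader.

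Next I would push this lock information through the view-change. Each of these $f+1$ honest parties, at the view-change of view $j$, multicasts its stored lock $\langle v, \sigma'\rangle$, i.e. $v_3 = v$ together with a valid step-$2$ proof for $party$. Every honest party waits for $n-f$ view-change messages before setting $view \leftarrow view+1$. Because these $n-f$ messages come from $n-f$ distinct parties and at most $2f = n-(f+1)$ parties lie outside the locking quorum, at least $(n-f)-2f = n-3f \ge 1$ of them originate from the locking set. Hence every honest party receives a view-change message carrying $v_3 = v$ with a valid step-$2$ proof before it advances, so the $LOCK$-update branch of Algorithm \ref{alg:E-VABA(messages)} fires; its guard $view > lock$ holds because a party entering view $j$ has $LOCK \le j-1$, so it installs $LOCK = j$. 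If its $LOCK$ was already at least $j$ the claim is immediate, and since the guard only ever raises $LOCK$, the inequality $LOCK \ge j$ is preserved through the transition to view $j+1$.

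The main obstacle I anticipate is the index bookkeeping between P-PB steps and stored proofs: the deciding party validates a step-$3$ signature, whereas the locking parties forward a step-$2$ signature, because a delivery at step $k$ carries the completion proof of step $k-1$. Ensuring this off-by-one lines up exactly — so that the proof the $f+1$ honest parties broadcast is precisely the one the $LOCK$-update branch checks, and that both are keyed to the same elected leader $party$ — is the delicate part of the argument; the intersection counting ($n-3f \ge 1$) and the monotonicity of $LOCK$ are routine by comparison.
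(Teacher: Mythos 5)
Your proof is correct and is precisely the standard VABA locking argument that the paper itself invokes implicitly: the paper states Lemma~\ref{lem:l9} without proof (deferring to VABA, per its stated convention of skipping proofs that carry over), and your chain --- decision implies a valid step-$3$ threshold signature, Lemma~\ref{lem:l4} yields $f+1$ honest parties with $deliver_3$ and hence a stored lock carrying a valid step-$2$ proof, quorum intersection over the $n-f$ view-change messages ($(n-f)-2f \geq 1$) forces every advancing honest party to fire the $LOCK$-update branch, with monotonicity covering parties already at $LOCK \geq j$ --- is exactly that argument, with the step-index off-by-one handled correctly against Algorithms~\ref{algo:PP(receiver)} and~\ref{alg:E-VABA(messages)}.
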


\begin{lemma}\label{lem:l11}
   If an honest party decides on $v$ in view $j$, then all honest parties that decide in view $j$, decide $v$ as well.
\end{lemma}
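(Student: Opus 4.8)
The plan is to reduce within-view agreement to the uniqueness of completion proofs guaranteed by Lemma~\ref{lem:l6}. The first task is to show that every honest party uses the same leader identity in view $j$. By the Agreement property of committee selection, all honest parties output the same committee $C$ in view $j$. The leader-election primitive (Algorithm~\ref{algo:Proposal-selection}) returns a deterministic function of the tossed coin, so every honest party obtains the same elected party; and since \emph{mapToParty} (Algorithm~\ref{algo:mapToParty}) is a deterministic function of the elected party and the common committee $C$, every honest party maps to the same committee member $p_l$ and sets $Party[j]=p_l$. Hence the P-PB identification used by all honest parties in view $j$ is the single fixed string $\langle id,p_l,j\rangle$.

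Next I would characterize the decision event. Inspecting the view-change handler in Algorithm~\ref{alg:E-VABA(messages)}, an honest party sets a decision in view $j$ only in the branch where $v_4=v\neq\bot$ and $\text{threshold-validate}(\langle\langle\langle id,p_l,j\rangle,3\rangle,v\rangle,t_4)=\text{true}$; this is the only code path that decides. Thus an honest party decides $v$ in view $j$ exactly when it holds a valid step-$3$ completion proof $\sigma$ for the P-PB with identification $\langle id,p_l,j\rangle$ on value $v$. Note that this $\sigma$ may have been carried inside a view-change message forwarded by another (possibly faulty) party; what matters for the argument is only that the deciding honest party \emph{obtains} such a $\sigma$, which is precisely the hypothesis of Lemma~\ref{lem:l6}.

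With these two facts in place the conclusion is immediate. Suppose honest $p_1$ decides $v_1$ and honest $p_2$ decides $v_2$, both in view $j$. Then $p_1$ obtains a valid threshold-signature for step $3$ on $v_1$ and $p_2$ obtains a valid threshold-signature for step $3$ on $v_2$, both under the common identification $\langle id,p_l,j\rangle$. Since $3\in\{2,3,4\}$, Lemma~\ref{lem:l6} applies with $j_1=j_2=3$ and forces $v_1=v_2$, which is exactly the claim.

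The step I expect to be the main obstacle is the first one: establishing that all honest parties commit to the same leader $p_l$, and hence to the same P-PB identification. This requires carefully combining the Agreement property of committee selection with the determinism of both the leader-election coin toss and the \emph{mapToParty} tie-break, together with the (implicit) assignment $Party[j]=p_l$. If even one honest party could end up with a different $Party[j]$, Lemma~\ref{lem:l6} could no longer be invoked across parties and the argument would break down. Everything after that is a direct application of Lemma~\ref{lem:l6} at step index $3$.
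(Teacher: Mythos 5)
Your proof is correct and takes essentially the argument the paper relies on: the paper omits a written proof for this lemma (deferring to VABA), where within-view agreement is reduced exactly as you do — the only code path that decides requires a valid step-$3$ threshold-signature under the common identification $\langle id,p_l,j\rangle$, and Lemma~\ref{lem:l6} with $j_1=j_2=3$ forces equality of the decided values. Your preliminary step establishing a common $Party[j]$ — combining the Agreement property of committee selection with the determinism of the coin-toss and of \emph{mapToParty} (including flagging the implicit assignment $Party[j]=p_l$, which the paper's pseudocode never writes down) — is precisely the one ingredient the committee-based variant adds over plain VABA, and you handle it correctly.
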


\begin{lemma} \label{lem:l12}
   Assume an honest party $p$ decides on $v$ in a view $j$. Then every view $j\geq j'$, and every honest party that gets $\langle v' \sigma' \rangle$ such that : threshold-validate($\langle\langle\langle id,l',j'\rangle,1\rangle, v'\rangle, \sigma'$) = true, where $p_{l'}$ is the leader of view $j'$, we get that $v'=v$.
\end{lemma}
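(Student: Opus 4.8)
The plan is to prove Lemma \ref{lem:l12} by strong induction on the view index $j'$, mirroring the safety argument of VABA \cite{BYZ17}. The two pillars will be Lemma \ref{lem:l6}, which says that all step-proofs produced within a single $4$-step P-PB instance are for the same value, to settle the base case, and Lemma \ref{lem:l9}, which says that once $p$ decides $v$ in view $j$ every honest party carries $LOCK \geq j$ into every later view, to drive the inductive step. Note that the statement is read as ``for every view $j' \geq j$''.

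For the base case $j' = j$, I would observe that deciding $v$ in view $j$ means some honest party obtained a high-step (commit) threshold signature for $v$ in the P-PB of view $j$'s leader $p_l = p_{l'}$. Any valid step-$1$ signature of view $j'=j$ is produced by that very same P-PB instance (it shares the identifier $\langle id, l, j\rangle$), so Lemma \ref{lem:l6} forces every value carrying such a step-$1$ signature to coincide with the decided value, giving $v' = v$.

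For the inductive step, I would assume the claim for all views in $\{j, \dots, j'-1\}$ and suppose an honest party obtains a valid step-$1$ signature for $v'$ in view $j' > j$. A step-$1$ signature requires $n-f$ sign-shares, so at least one honest party $q$ signed the step-$1$ (key) message of view $j'$. By the \textit{EX-PB-VAL}/\textit{check-key} test, $q$ signs only if the promoted value $v'$ is accompanied by a valid step-$1$ (key) proof inherited from some strictly earlier view $j''$ whose view number is at least $q$'s current $LOCK$. Since $j' > j$, party $q$ has already advanced past view $j$, so Lemma \ref{lem:l9} gives $LOCK \geq j$ for $q$; together with $j'' < j'$ this pins down $j \leq j'' \leq j'-1$. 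The carried proof is itself a valid step-$1$ signature of view $j''$ for $v'$, so the inductive hypothesis applies at view $j''$ and yields $v' = v$, closing the induction.

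The main obstacle I anticipate is pinning down exactly what the key check in \textit{check-key} guarantees, so that the inherited key's view $j''$ is simultaneously (i) at least $j$, which I obtain from the monotone $LOCK \geq j$ invariant of Lemma \ref{lem:l9}, and (ii) strictly below $j'$, which holds because a promoted key is always adopted from a previous view via the $PREPARE$ variable. Some care is also needed to confirm that the proof embedded in the key is of step-$1$ type for the leader of view $j''$ (rather than a higher-step proof), so that the inductive hypothesis, which is stated precisely for step-$1$ signatures, is the statement that applies.
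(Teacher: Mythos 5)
Your proposal is correct and takes essentially the same route the paper relies on: the paper prints no proof of Lemma~\ref{lem:l12} at all (per its stated policy of skipping proofs that ``follow the VABA''), and your strong induction on $j'$ --- base case from value-uniqueness within the elected leader's 4-step P-PB instance, inductive step from the \textit{check-key} validation together with the $LOCK \geq j$ invariant of Lemma~\ref{lem:l9}, which confines the inherited key's view $j''$ to $j \leq j'' < j'$ --- is precisely the VABA safety induction that the paper defers to. The one blemish is that Lemma~\ref{lem:l6} as restated in this paper covers only steps $\{2,3,4\}$, so your base case, which compares a step-$1$ key signature for $v'$ against the step-$3$ signature justifying the decision on $v$, formally needs the step-$1$ instance of that uniqueness claim; this is obtained by chaining Lemmas~\ref{lem:l4} and~\ref{lem:l5} downward from the step-$3$ signature to a valid step-$1$ signature for $v$, and then applying the PB-Provability (quorum-intersection) property of the single instance $\langle\langle id,l,j\rangle,1\rangle$, since each honest party sign-shares step $1$ at most once.
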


\begin{lemma} \label{lem:l14}
Consider a view $j$ and assume that some honest party $p$ sets its $LOCK$ variable to $j$ in view $j$, then all honest parties set their $PREPARE$ variable to $\langle j, v, \sigma \rangle$ such that: $PREPARE = \langle j, \sigma \rangle$ is valid for $v$ before moving to view $j+1$.
\end{lemma}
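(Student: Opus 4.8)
The plan is to trace the premise ``$p$ sets $LOCK \leftarrow j$ in view $j$'' back to the existence of a valid step-2 threshold-signature, and then push that fact forward into every honest party's $view$-$change$ processing so that it becomes a valid step-1 signature that sets $PREPARE$. Concretely, inspecting the $view$-$change$ handler in Algorithm~\ref{alg:E-VABA(messages)}, the only way $p$ executes $LOCK \leftarrow j$ is that it received a $view$-$change$ message for view $j$ carrying $\langle v_3, t_3\rangle$ with $v_3 = v \neq \bot$, $j > lock$, and threshold-validate$(\langle\langle\langle\langle id, p_l, j\rangle, 2\rangle, v\rangle, t_3\rangle) = true$, where $p_l = Party[j]$ is the elected leader of view $j$. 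Hence a valid step-2 signature for $v$ (tied to $p_l$ and view $j$) exists and has been observed by an honest party.

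First I would apply Lemma~\ref{lem:l8} with step index $i = 2$: since an honest party (namely $p$) obtained a $view$-$change$ message with a valid step-2 signature for $v$, the lemma yields that \emph{every} honest party obtains a $view$-$change$ message containing some $\langle v, \sigma\rangle$ with threshold-validate$(\langle\langle\langle\langle id, p_l, j\rangle, 1\rangle, v\rangle, \sigma\rangle) = true$, i.e.\ a valid step-1 signature for the same value $v$. If one prefers to unfold Lemma~\ref{lem:l8}, the same conclusion follows by applying Lemma~\ref{lem:l4} at step $2$ to extract $f+1$ honest parties whose local $prepare$ equals $\langle v, \sigma\rangle$ with $\sigma$ a valid step-1 proof (by Lemma~\ref{lem:l5}); each such party sends this as the $\langle v_2, t_2\rangle$ component of its $view$-$change$, and a quorum-intersection argument (any $n-f$ received $view$-$change$ messages meet the $f+1$ good senders in at least one party) shows each honest party receives a qualifying message among the $n-f$ it awaits before advancing.

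Next I would read off the $view$-$change$ handler that processing this qualifying message performs $PREPARE \leftarrow \langle j, v, \sigma\rangle$: the guard demands $v_2 \neq \bot$, $j > prepare.view$, and a valid step-1 signature, all of which hold. The guard $j > prepare.view$ is satisfied because $PREPARE.view$ is only ever raised to the current view during that view's $view$-$change$ phase, so upon entering view $j$ it is at most $j-1$. Since every honest party waits for $n-f$ $view$-$change$ messages before incrementing its view, and the qualifying message is among those it receives, the assignment occurs before the party moves to view $j+1$; the stored $\sigma$ is a valid step-1 proof for $v$, so $PREPARE = \langle j, v, \sigma\rangle$ is indeed valid for $v$.

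The main obstacle — and the one point needing care — is uniqueness of the stored value under concurrent $view$-$change$ processing. A party may handle several view-$j$ $view$-$change$ messages; the first one carrying a valid step-1 signature raises $PREPARE.view$ to $j$, after which later messages fail the test $j > prepare.view$ and do not overwrite it. I would close this gap by invoking provability (Lemma~\ref{lem:l6} together with the P-PB provability property): every valid step-1 signature associated with leader $p_l$ in view $j$ certifies the same value $v$, and by unforgeability no Byzantine party can fabricate a valid one, so the only qualifying messages carry $v$. Consequently, regardless of which qualifying message wins the race to update $PREPARE$, the stored pair is $\langle j, v, \sigma\rangle$ with $\sigma$ a valid step-1 proof for $v$, establishing the claim for all honest parties before they advance to view $j+1$.
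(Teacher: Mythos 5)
Your proof is correct and takes essentially the route the paper relies on: the paper itself supplies no proof of Lemma~\ref{lem:l14} (it is among the lemmas explicitly deferred to the VABA analysis), and the intended argument is exactly your chain --- trace the assignment $LOCK \leftarrow j$ back to a valid step-2 threshold-signature carried in a view-change message, apply Lemma~\ref{lem:l8} with $i=2$ (or, as you note, unfold it through Lemmas~\ref{lem:l4} and~\ref{lem:l5} together with the quorum-intersection count $(n-f)+(f+1)>n$) so that every honest party receives a view-change message with a valid step-1 signature for the same $v$, and then read off the guarded $PREPARE \leftarrow \langle j, v, \sigma\rangle$ update before the party advances. Your explicit treatment of the overwrite race via signature uniqueness (Lemma~\ref{lem:l6} plus the P-PB provability and unforgeability properties) is a detail the deferred proof leaves implicit, and you resolve it soundly.
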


\begin{lemma} \label{lem:l15} 
If all selected honest parties start with externally valid values for byzantine agreement, then for every view $j\geq 1$, all messages broadcast by honest parties are externally valid for the 4-step P-PB.
\end{lemma}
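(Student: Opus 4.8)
The plan is to prove the statement by induction on the view number $j$. In each view I will show that the value an honest party feeds into \emph{promote} satisfies the external-validity predicate \emph{EX-VABA-VAL}, and that the auxiliary proofs demanded by \emph{EX-PB-VAL} at each of the four P-PB steps are present.

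For the base case $j=1$, every selected honest party begins with an externally valid value by hypothesis, so \emph{EX-VABA-VAL}$(v)$ returns true. I would then inspect the step-$1$ predicate \emph{check-prepare} in Algorithm \ref{algo: PB-messages}: its external-validity test passes by assumption; the threshold-validate branch is skipped because $view = 1$; and $view = 1 \ge LOCK = 0$ holds at initialization. Hence the step-$1$ message is externally valid. For steps $2$, $3$, and $4$, \emph{EX-PB-VAL} only requires a valid threshold-signature on the previous step's message, and since an honest sender advances to step $i$ only after completing step $i-1$ (so by PB-Termination it holds a signature passing threshold-validate), every such broadcast is externally valid as well.

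For the inductive step I assume the claim for all $j' < j$ and consider an honest party entering view $j > 1$. If the party proposes a fresh request, the argument is identical to the base case, since honest parties inject only externally valid requests. The substantive case is when the party adopts the previous view's prepare and promotes the pair stored in its $PREPARE$ variable; here I must argue that the adopted value is still externally valid. The variable $PREPARE$ is written only inside the view-change handler of Algorithm \ref{alg:E-VABA(messages)}, and only when $v_2 \neq \bot$ together with a successful threshold-validate of a step-$1$ signature for $v_2$. By the provability property (Lemma \ref{lem:l4}), such a valid step-$1$ signature forces at least $f+1$ honest parties to have run $deliver_1$ on $v_2$ in that earlier view; by PB-Validity each such delivery required \emph{EX-PB-VAL} to be true at step $1$, hence \emph{check-prepare}$(v_2, \cdot)$ is true, hence \emph{EX-VABA-VAL}$(v_2)$ is true. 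Promoting this value in view $j$ then passes \emph{check-prepare}, because the external-validity test holds, the step-$1$ proof carried alongside $PREPARE$ validates, and $view = j \ge LOCK$ (as $LOCK \le j-1$ upon entering view $j$); steps $2$ through $4$ follow exactly as in the base case.

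The main obstacle is precisely this adoption case: I cannot merely quote the induction hypothesis about honest \emph{broadcasts}, because the threshold-signature attached to an adopted $PREPARE$ may have been assembled with the cooperation of a Byzantine sender. The clean route is to push the argument through provability (Lemma \ref{lem:l4}) and PB-Validity, which together guarantee that any value carrying a valid step-$1$ threshold-signature was genuinely delivered, and therefore externally validated, by $f+1$ honest parties regardless of who the sender was. A secondary detail to verify is that the $view \ge LOCK$ clause of \emph{check-prepare} never blocks an honest promotion; this follows because $LOCK$ is advanced only during view-change and thus trails the current view by at least one.
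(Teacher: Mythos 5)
There is a genuine gap, and it sits exactly at the point you dismissed as a ``secondary detail.'' The final test in the step-1 predicate of Algorithm \ref{algo: PB-messages} (the procedure named \emph{check-key}, invoked as \emph{check-prepare}) is not a comparison of the \emph{current} view against the receiver's $LOCK$. The prepare handed to \emph{promote} is the pair $\langle PREPARE.view, PREPARE.proof\rangle$, the threshold-validation is performed against $Party[view]$ at step~1 of the \emph{prepare's} view, and the concluding test $view \geq LOCK$ compares the (possibly stale) view recorded in the adopted prepare against the receiver's lock --- this is what gives the test its safety content; under your reading ($view = j$, the current view, so $LOCK \leq j-1 < j$ always passes) the check would be vacuous for every sender, honest or Byzantine, and the agreement argument (Lemma \ref{lem:l12}) would collapse. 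The scenario your proof must exclude, and does not, is this: an honest party enters view $j$ holding a prepare from an old view $l$, while some honest receiver has $LOCK = i > l$; that receiver refuses to sign-share the step-1 message, so the honest party's broadcast is \emph{not} externally valid for the 4-step P-PB --- precisely the failure the lemma asserts cannot occur.

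The paper closes this hole with Lemma \ref{lem:l14}: setting $i = \max\{\,l \mid \text{some honest party has } LOCK = l \text{ entering view } j\,\}$, Lemma \ref{lem:l14} guarantees that all honest parties set their $PREPARE$ variable to a valid prepare of view $i$ before leaving view $i$, and since $PREPARE$ is thereafter overwritten only by a valid prepare of a strictly larger view (the view-change handler requires $view > prepare.view$), every honest party enters view $j$ with $PREPARE.view \geq i \geq LOCK_p$ for every honest $p$, so the lock test passes at all honest receivers. Your proof never invokes Lemma \ref{lem:l14} or any substitute for it, so the adoption case is incomplete. The rest of your argument is sound and in fact matches the paper: your route through provability (Lemma \ref{lem:l4}) and PB-Validity to show that any value carrying a valid step-1 threshold signature was delivered, hence externally validated, by $f+1$ honest parties is essentially the paper's own final step (and your observation that the induction hypothesis alone does not cover Byzantine-assembled signatures is exactly the right instinct); the base case and the step-$2$ through step-$4$ signature checks are also handled correctly. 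The fix is local: replace the ``$LOCK$ trails the current view'' remark with the Lemma \ref{lem:l14} argument establishing condition $PREPARE.view \geq \max$ honest lock.
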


\begin{proof}
In view $j = 1$, since all selected parties start with externally valid values for the Byzantine agreement, but a valid prepare is not required in view $1$, we see that all messages broadcast by the selected honest parties in view 1 are externally valid for the 4-step P-PB.

In view $j > 1$. Let i = max({l | there is an honest party $p$ who’s $lock = l$ when it begins view $j$}). We need to show that when a party
begins view $j$, its $PREPARE$ variable is equal to $\langle l, v, \sigma_i \rangle$ such that: 

\begin{enumerate}
    \item $\langle l, v, \sigma_i \rangle$ is a valid prepare for $v$.
    \item $l \geq i$.
    \item $v$ is an external valid value for byzantine agreement
\end{enumerate}

From lemma \ref{lem:l14}, all honest parties set their $PREPARE$ variable to $\langle i, v, \sigma \rangle$ when they completes view $i$ such that: $prepare = \langle j, \sigma \rangle$ is valid for $v$. After view $i$, the $PREPARE$ variable is updated to $\langle l, v$'$
, \sigma$'$\rangle$ by an honest party $p$ only if
the party $p$ receives a $VIEW-CHANGE$ message with $\langle v', \langle l,\sigma $ $'$$\rangle\rangle$  such that:  $l > i$ and $\langle l, \sigma'$$\rangle$ is a valid $prepare$ for $v'$. Therefore, we see that (1) and (2) are satisfied. Now by the P-PB-provability property, we get that at least one honest party delivers $\langle v, \sigma''$$\rangle$ for some $\sigma''$ at some instance of P-PB, and so $v$ is an external valid value for byzantine agreement.
\end{proof}

\begin{lemma} \label{lem:l17}
Assume that all selected honest parties start with externally valid values for a byzantine agreement. If all messages sent by honest parties in view $j$ arrived and no honest party sets $skip=true$, then at least one 4-step P-PB issued by an honest party has completed and returned completion-proofs.
\end{lemma}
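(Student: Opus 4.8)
The plan is to produce an explicit honest sender in view $j$ whose 4-step P-PB can be driven to completion via the P-PB termination lemma, and then to check each of that lemma's premises against the hypotheses. First I would apply the Validity property of Committee Selection: the committee $PrioritizedParties$ has size $f+1$ and, except with negligible probability, contains at least one honest party. Fix such a party $p_h$; it is both honest \emph{and} selected. Being selected matters here because, in Algorithm \ref{algo: PB-messages}, an honest receiver adds a sign-share only when the sender passes the check $ID.id \in PrioritizedParties$, so honest parties will reply to $p_h$ with sign-shares and $p_h$ can collect the required $n-f$ of them.

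Next I would certify that the message $p_h$ broadcasts is externally valid for the 4-step P-PB. This is exactly Lemma \ref{lem:l15}: under the standing assumption that all selected honest parties begin with externally valid values, every message broadcast by an honest party in view $j$ is externally valid for the 4-step P-PB. In particular the $EX$-$PB$-$VAL$ predicate holds for $p_h$'s message at each of the four steps, so honest receivers accept it rather than discarding it.

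The remaining premise to discharge is the absence of abandonment, and this is where I would read the protocol carefully. In Algorithm \ref{algo:Efficient-VABA} the only invocation of $abandon\langle id, view \rangle$ for the prioritized parties sits inside the ``upon $skip = true$'' handler. Since by hypothesis no honest party sets $skip = true$ in view $j$, no honest party ever reaches that handler, and hence no honest party invokes abandon on $p_h$'s broadcast (nor on any other broadcast). Together with the hypothesis that all messages among honest parties in view $j$ arrive, this verifies all four premises of Lemma \ref{lem:l7} for the sender $p_h$: honest sender, no abandon, all messages delivered, and externally valid message. Applying Lemma \ref{lem:l7} then yields that $p_h$'s 4-step P-PB completes and returns a completion-proof, which is precisely the claim since $p_h$ is an honest party. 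The main obstacle is the abandonment step: one must confirm from the pseudocode that $skip = true$ is the sole trigger for abandon, so that the no-skip hypothesis genuinely rules out every abandonment; a secondary caveat is that the honest sender exists only except with negligible probability, so the conclusion inherits the committee-selection failure term.
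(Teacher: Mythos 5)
Your proof is correct and follows essentially the same route as the paper's: Lemma \ref{lem:l15} supplies external validity of the selected honest parties' broadcasts, the no-skip hypothesis rules out any invocation of abandon, and Lemma \ref{lem:l7} then yields completion and a returned completion-proof. You additionally make explicit the committee-selection Validity step guaranteeing that an honest selected sender exists (except with negligible probability) --- a point the paper leaves implicit even though its ``at least one'' conclusion depends on it --- so your version is, if anything, slightly more careful.
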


\begin{proof} From lemma \ref{lem:l15}, all messages broadcast by the selected honest parties are externally valid for the 4-step P-PB protocol. Since no honest party sets $skip = true$ in view $j$, no honest party invokes abandon in
a the 4-step P-PB protocol of view $j$. Therefore, from lemma \ref{lem:l7}, all broadcast issued by the selected honest parties have completed and returned completion-proofs.
\end{proof}

\begin{lemma} \label{lem:l18}
   Assume a view $j$, if all messages sent among honest parties in a view $j$ arrived and a 4-step P-PB issued by an honest party in view $j$ has been completed and returned completion-proofs, then all honest parties set $skip = true$ in view $j$.
\end{lemma}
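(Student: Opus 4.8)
\section*{Proof plan for Lemma \ref{lem:l18}}

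The plan is to chase the single guaranteed event --- the completion of one honest party's $4$-step P-PB in view $j$ --- forward through the message pipeline of Algorithms \ref{algo:Efficient-VABA} and \ref{alg:E-VABA(messages)}, namely completion $\to$ \emph{suggestion} $\to$ \emph{DONE} $\to$ \emph{SKIP-SHARE} $\to$ $skip$, invoking the hypothesis that every message among honest parties is eventually delivered at each hop. The one feature that distinguishes this protocol from VABA is that here only a single honest committee member completes its broadcast, so I cannot obtain the required $n-f$ \emph{DONE} messages from $n-f$ distinct completions; the suggestion step is precisely what repairs this, and that is where the real work lies. Throughout I would use the P-PB provability results (Lemmas \ref{lem:l4}--\ref{lem:l6}) only to know that the completion proof being carried around is genuinely step-$4$ valid, so that the \emph{DONE} validation checks at honest recipients pass.

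Carrying this out in order: let $p_h$ be the honest party whose $4$-step P-PB completed. On returning from \emph{promote} it multicasts a \emph{proposal} and a \emph{suggestion}, both carrying its completion proof $t_{sign}$. Since all honest messages arrive, every honest party receives this, and by the suggestion handler each honest party that has not yet suggested multicasts its own \emph{suggestion} (carrying the same valid proof) and sets $suggest = true$; hence all $n-f$ honest parties suggest, and each honest party collects at least $n-f$ suggestions. At that point the main loop has every honest party whose $skip[view]$ is still $false$ multicast a \emph{DONE} message carrying the completion proof, so $n-f$ valid \emph{DONE} messages reach every honest party, driving each $PPdone[view]$ to $n-f$ and triggering a \emph{SKIP-SHARE}. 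Finally, a single valid \emph{SKIP-SHARE} sets $skip[view] = true$ at its receiver, and because honest \emph{SKIP-SHARE} messages are delivered everywhere, every honest party sets $skip = true$ in view $j$, which is the claim.

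The step I expect to be the main obstacle is manufacturing $n-f$ \emph{DONE} messages from a single completion and reconciling this with the guard conditions. Concretely I must argue (i) that the one completion proof really does propagate to all honest parties through the two-stage suggestion relay rather than being suppressed when the $suggest = false$ guard fires on a stale or adversarial suggestion, and (ii) that the chain is not short-circuited by some honest parties setting $skip = true$ ``early'' --- for instance via a valid \emph{SKIP-SHARE} that a Byzantine party mints with its own key share --- which would suppress their \emph{DONE} and threaten the $n-f$ count. I would resolve (i) by noting that $p_h$'s suggestion carries a proof that passes the \emph{DONE} validation check, so any honest \emph{DONE} built from it is accepted; and I would resolve (ii) by splitting into the case where no honest party skips early, in which the clean count of the previous paragraph applies verbatim, and the case where some honest party does, in which that party already satisfies the conclusion while the \emph{SKIP} message it (or the accumulated \emph{SKIP-SHARE}s) induces is re-multicast by the \emph{SKIP} handler and hence, by delivery of honest messages, reaches and flips every remaining honest party. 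Safety of an early skip is irrelevant here, since the conclusion only asserts that $skip = true$ is eventually reached.
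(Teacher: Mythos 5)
Your route is the same as the paper's: the paper's proof of Lemma \ref{lem:l18} is precisely the ``clean'' pipeline you describe --- the one completed promotion is multicast as a proposal/suggestion, every honest party relays a suggestion on first receipt, so all $\langle n-f \rangle$ honest parties suggest and each collects $\langle n-f \rangle$ suggestions, broadcasts \emph{DONE}, reaches the $\langle n-f \rangle$ \emph{DONE} threshold, broadcasts \emph{SKIP-SHARE}, assembles the threshold \emph{SKIP}, and sets $skip = true$. The paper, however, never considers your obstacle (ii): it implicitly assumes every honest party still has $skip[view] = false$ when its suggestions arrive, i.e., that \emph{SKIP-SHARE} messages only ever originate from honest parties that went through the \emph{DONE} path. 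In that respect you probe deeper than the published argument.

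Unfortunately, your resolution of case (ii) fails on exactly the scenario that motivates it, so the extra care does not yet close the lemma. In Algorithm \ref{alg:E-VABA(messages)}, a \emph{single} valid \emph{SKIP-SHARE} sets the recipient's $skip[view] = true$, and $share\mbox{-}validate$ only checks that the share is a correct signature share on $\langle id, SKIP, view\rangle$ by its sender --- any Byzantine party can mint one from its own key share with no \emph{DONE} messages behind it. Crucially, such a message triggers no re-multicast: the \emph{SKIP-SHARE} handler broadcasts nothing until $|PPskip[view]| = n-f$, and the \emph{SKIP} handler (the only one that re-multicasts) fires only on a threshold signature, which needs $n-f$ shares. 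So if the adversary privately delivers one \emph{SKIP-SHARE} to a nonempty set $S$ of honest parties, those parties skip ``early,'' their \emph{DONE} is suppressed by the $skip[view]=false$ guard in Algorithm \ref{algo:Efficient-VABA}, at most $n-f-|S|$ honest \emph{DONE}s circulate, $PPdone[view]$ never reaches $n-f$ anywhere, and the honest parties outside $S$ never set $skip = true$. Neither branch of your case split covers this: there is an early honest skip, but no \emph{SKIP} message and no accumulated shares for the \emph{SKIP} handler to re-multicast. Closing the gap requires either changing what you prove (e.g., only honest-path \emph{SKIP-SHARE}s set $skip[view]$, as the paper's parenthetical ``since already $\langle f+1\rangle$ honest parties have broadcast the \emph{DONE} messages'' wrongly assumes of the literal code) or acknowledging that the lemma holds only under that intended, stronger validation. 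A smaller instance of the same literal-versus-intended tension sits in your step (i): as written, the \emph{DONE} check threshold-validates against the \emph{sender's} id $k$, so a non-selected party relaying $p_h$'s completion proof would be rejected; both you and the paper silently rely on the intended reading signalled by the paper's phrase ``its proposal or received proposal.''
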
 

\begin{proof}
From the algorithm, an honest party broadcasts a $DONE$ message with a completion-proof (of its proposal or received proposal) upon receiving $\langle n-f \rangle$ suggestions. Therefore, since all messages sent among honest parties in a view $j$ arrived, we get that honest parties receive s$\langle n-f \rangle$ suggestions and all
honest parties broadcasts a $SKIP-SHARE$ message. There are at least $\langle n-f \rangle$ honest parties, all honest parties generated a valid threshold-signature on $SKIP$ and multi-casts the threshold-signature. Thus, all honest
parties gets a $SKIP$ message with a valid threshold-signature and set $skip = true$ in view $j$.
\end{proof}

\begin{lemma}  \label{lem:l19}
Assume the selected honest parties start with externally valid values for byzantine agreement, if all messages sent among honest parties in a view $j$ arrived, then view $j$ completed.
\end{lemma}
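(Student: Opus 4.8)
The plan is to show that every honest party advances from view $j$ to view $j+1$, which is precisely what it means for view $j$ to complete. I would split the argument into two phases: first, that all honest parties eventually set $skip = true$ in view $j$; and second, that once $skip = true$ holds everywhere, the leader-election, mapping, and view-change steps all go through, so that each honest party multi-casts a $VIEW\text{-}CHANGE$ message and collects the $\langle n-f \rangle$ such messages it needs to increment its view counter.

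For the first phase I would start from Lemma \ref{lem:l15}, which guarantees that every message the selected honest parties broadcast in view $j$ is externally valid for the 4-step P-PB, so no broadcast is rejected on validity grounds. I would then argue by cases on whether some 4-step P-PB issued by an honest party completes. If one does, Lemma \ref{lem:l18} directly yields that all honest parties set $skip = true$. If none does, the contrapositive of Lemma \ref{lem:l17} forces at least one honest party to set $skip = true$ anyway. In either case at least one honest party reaches $skip = true$, and I would then trace the $\text{DONE} \to \text{SKIP-SHARE} \to \text{SKIP}$ chain in Algorithms \ref{algo:Efficient-VABA} and \ref{alg:E-VABA(messages)}: once any honest party assembles $\langle n-f \rangle$ $\text{SKIP-SHARE}$ messages it multi-casts a threshold-signed $\text{SKIP}$, and because all messages among honest parties arrive and every honest recipient re-broadcasts a valid $\text{SKIP}$, this message reaches every honest party, so all of them set $skip = true$.

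For the second phase, once an honest party has $skip = true$ it leaves the \textbf{wait until} $skip = true$ guard, abandons the remaining P-PB instances via $abandon\langle id, view \rangle$, invokes $elect\langle id, view \rangle$, maps the elected identity into the committee with $mapToParty$ when the elected party is not in $PrioritizedParties$, and multi-casts its $VIEW\text{-}CHANGE$ message carrying $getPrepare$, $getLock$, and $getCommit$. Since there are at least $\langle n-f \rangle$ honest parties and all of their messages are delivered, every honest party receives at least $\langle n-f \rangle$ $VIEW\text{-}CHANGE$ messages, runs the view-change handler, and sets $view \leftarrow view + 1$. This establishes that view $j$ completes.

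I expect the main obstacle to lie in the first phase, namely in showing that $skip = true$ propagates to \emph{all} honest parties rather than to just one. The delicate point is that the completed 4-step P-PB triggering the $\text{DONE}$ messages may belong to a Byzantine (but selected) party, so in that sub-case I cannot appeal to Lemma \ref{lem:l18} directly; instead I must rely on the unforgeability of the threshold signature to argue that a valid $\text{SKIP}$, once formed from $\langle n-f \rangle$ honest $\text{SKIP-SHARE}$ shares, is re-multicast and therefore observed by every honest party under the delivery assumption. Verifying that at least $\langle n-f \rangle$ honest parties actually emit $\text{SKIP-SHARE}$ messages, i.e. that each collects the required $\langle n-f \rangle$ $\text{DONE}$ messages, is the step that will need the most care.
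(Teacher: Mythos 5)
Your proof is correct and follows essentially the same route as the paper's: both first establish that all honest parties set $skip = true$ in view $j$ using Lemmas \ref{lem:l17} and \ref{lem:l18} (the paper via a proof by contradiction, you via the logically equivalent case split on whether an honest party's 4-step P-PB completes, together with the one-to-all $SKIP$-propagation observation the paper also invokes), and then conclude through the termination of $elect$ and the arrival of $\langle n-f \rangle$ $VIEW-CHANGE$ messages at every honest party. The only difference is presentational—your explicit tracing of the $DONE$/$SKIP-SHARE$/$SKIP$ chain and your caveat about completions issued by Byzantine committee members make the propagation step somewhat more careful than the paper's unproved observation, but the decomposition and the key lemmas are identical.
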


\begin{proof}
We first show that all honest parties set $skip = true$ in view $j$. we also observe that if all messages sent by honest parties in a view $j$ arrived and some honest party sets $skip = true$ in view $j$, then all honest parties set $skip = true$ in that view. Let, if no honest party sets $skip = true$ in view $j$. Then from lemma \ref{lem:l17}, at least one 4-step P-PB issued by a selected honest party has completed, proposed and suggested. Therefore, from lemma \ref{lem:l18}, all honest parties sets $skip = true$ in view $j$. A contradiction.

We conclude that no honest party waits forever for elect(j) to return or to a $VIEW-CHANGE$ message from $\langle n - f \rangle$ different parties. Since, all honest parties set $skip = true$ in view $j$, we see from code that they all invoke elect(j), and thus by the termination property of the leader election abstraction we find that all invocations of elect(j) by honest parties return. Therefore, all honest parties multi-cast a $VIEW-CHANGE$ message and all messages sent among honest parties in a view $j$ arrived, we get that all honest parties get $VIEW-CHANGE$ messages from at least $\langle n - f \rangle$ different parties.
Hence, all parties move to view $j + 1$.
\end{proof}

\begin{theorem}
   The Efficient-VABA protocol satisfies the Agreement, External-Validity, and Liveness/Termination properties of the VABA protocol, given that the underlying Committee selection, P-PB and Leader election protocols are secure. 
\end{theorem}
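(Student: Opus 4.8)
The plan is to prove the theorem one property at a time, treating Agreement, External-Validity, and Liveness as separate claims assembled from the lemmas already established, and following the VABA template of Abraham et al.\ while carefully accounting for the two structural deviations introduced here: the restriction of proposers to an $(f+1)$-committee and the leader-to-committee mapping. Before touching the three properties I would record one consistency fact on which the whole argument rests, namely that in any fixed view $j$ all honest parties select the \emph{same} committee member as the effective leader. This follows by composing the Agreement property of Committee Selection (all honest parties hold the identical set $C = PrioritizedParties$), the Agreement property of Leader Election (all honest parties obtain the same elected id), and the determinism of \texttt{mapToParty} (Algorithm \ref{algo:mapToParty}), which is a pure function of the elected id and $C$. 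Since each ingredient is deterministic once its inputs agree, the mapped leader $p_l$ is common to all honest parties, which is exactly the hypothesis that every view-indexed lemma (Lemmas \ref{lem:l8}--\ref{lem:l14}) silently assumes.

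For External-Validity I would argue directly from the decision rule. An honest party decides $v$ only inside the view-change handler of Algorithm \ref{alg:E-VABA(messages)}, after \texttt{threshold-validate} succeeds on a step-$4$ proof for $p_l$. By Lemma \ref{E-VABA: selected} only a committee member could have produced such a proof, and by the P-PB-Validity property together with Lemma \ref{lem:l15} every message that an honest party ever sign-shares in the $4$-step P-PB is externally valid. Hence the value carried by any completion proof, and in particular the decided $v$, satisfies $externally\text{-}valid\langle v \rangle = true$.

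Agreement is the heart of the proof and splits into a same-view part and a cross-view part. Within a single view $j$ the claim is Lemma \ref{lem:l11}: the common leader $p_l$ together with the provability and consistency of P-PB (Lemmas \ref{lem:l6} and \ref{lem:l8}) forces any two deciding honest parties to decide the same $v$. The cross-view part is where I would spend the most effort: I would show that a decision on $v$ in view $j$ propagates forward so that no later view can decide anything else. The chain is Lemma \ref{lem:l9} (every honest party enters view $j+1$ with $LOCK \geq j$), Lemma \ref{lem:l14} (a lock at $j$ forces every honest $PREPARE$ to a valid $\langle j, v, \sigma \rangle$), and Lemma \ref{lem:l12} (from view $j$ onward any value admitting a valid step-$1$ proof equals $v$); together these imply that in every view $j' \geq j$ the only value that can complete a promotion, and therefore the only value any honest party can decide, is $v$. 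The main obstacle here is confirming that the lock-adoption mechanism is not weakened by the reduced proposer set: because only committee members broadcast, I must check that the $view > LOCK$ guard in \texttt{check-key}/\texttt{check-prepare} (Algorithm \ref{algo: PB-messages}) still prevents a stale value from re-acquiring a fresh step-$1$ signature, which reduces to Lemma \ref{E-VABA: selected} together with the provability of Lemma \ref{lem:l4}.

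Finally, for Liveness/Termination I would first invoke Lemma \ref{lem:l19} to guarantee that every view in which all honest messages are delivered completes, so the protocol never blocks and always advances to the next view. Termination then becomes a per-view probabilistic argument: by the Validity property of Committee Selection the committee contains at least one honest party except with probability $exp(-\epsilon(\kappa))$, and by Lemma \ref{lem:l17} that honest member's $4$-step P-PB completes and its completion proof is disseminated via the suggestion/$DONE$ path, so a decision occurs precisely when the mapped leader $p_l$ coincides with a committee member whose promotion completed. Using Unpredictability of Leader Election, the elected id is independent of the committee, so this event has a probability bounded below by a constant, and since views complete one after another the number of views until a decision is geometrically distributed with constant expectation, giving termination with probability $1$. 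The hard part of this last step is pinning down the per-view success probability under the mapping: unlike plain VABA, where the leader is uniform over all $n$ parties and at least $n-f$ have completed, here completion is confined to the committee and the nearest-neighbour map distorts the induced distribution, so I would need to argue that \texttt{mapToParty} still selects a completed committee member with probability bounded away from zero rather than inheriting the clean $\tfrac{2}{3}$ bound directly.
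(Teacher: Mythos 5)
Your treatment of Agreement and External-Validity is essentially the paper's own proof. The paper likewise derives Agreement by combining Lemma~\ref{lem:l11} for the same-view case with Lemma~\ref{lem:l12} and Lemma~\ref{lem:l6} for views $j' > j$ (you additionally unpack the dependence of Lemma~\ref{lem:l12} on Lemmas~\ref{lem:l9} and~\ref{lem:l14}, which is how the underlying VABA argument is structured, so this is elaboration rather than deviation), and it derives External-Validity from Lemma~\ref{E-VABA: selected} plus a repetition of the argument of Lemma~\ref{lem:l15}, exactly as you propose. Your preliminary observation that all honest parties compute the same mapped leader --- composing Committee-Selection Agreement, Leader-Election Agreement, and the determinism of Algorithm~\ref{algo:mapToParty} --- is nowhere stated in the paper but is silently assumed by every view-indexed lemma; making it explicit is a genuine improvement.

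Where you part ways with the paper is Termination, and there both texts have trouble, in opposite directions. The paper's entire termination argument is to apply Lemma~\ref{lem:l19} inductively and assert that, since views keep completing and parties keep sending messages, termination holds; this only establishes that the protocol never blocks, not that any honest party ever decides, and no per-view decision probability is ever computed. You instead attempt the standard VABA-style geometric argument, which is the right shape of proof, but your claimed constant lower bound on the per-view success probability does not hold as stated. The Validity property of Committee Selection guarantees only \emph{one} honest committee member; the other $f$ members may be Byzantine and may decline to complete their promotions, so in the worst case a decision requires the mapped leader to hit that single honest member. Under the nearest-neighbour map of Algorithm~\ref{algo:mapToParty}, a uniformly elected leader lands on a given committee member with probability proportional to the size of that member's cell of ids, which in expectation is $n/(f+1)$ out of $n$; the worst-case per-view success probability is therefore $\Theta(1/f)$, not $\Theta(1)$, yielding an expected $\Theta(n)$ views rather than $O(1)$ --- consistent, incidentally, with the abstract's admission that the protocol is ``time-consuming,'' though inconsistent with the $O(1)$ expected-time claim in the paper's efficiency section. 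You correctly flagged exactly this step as the hard part; closing it requires either accepting the $\Theta(n)$-view bound (almost-sure termination still follows from the geometric tail with per-view failure probability $1-\Theta(1/f)$) or strengthening the protocol, and the independence you invoke from Unpredictability also needs care, since the adversary observes the committee before the election and can adaptively choose which Byzantine promotions to complete.
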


\begin{proof}
\textit{Agreement:} Let a party $p$ first decides on a value $v$ in view $j$. From lemma \ref{lem:l11}, all honest parties that decide in view $j$, decide $v$. Now, if we consider a view $j'>j$, from lemma \ref{lem:l12}, if an honest party gets $\langle v', \sigma' \rangle$ such that: threshold-validate$(\langle \langle \langle id, l', j'\rangle, 1\rangle, v'\rangle, \sigma') = true$, where $p_l'$ is the leader of view $j'$, then $v=v'$. Thus by lemma \ref{lem:l6}, if an honest party gets $\langle v', \sigma'\rangle$ such that: threshold-validate$(\langle \langle \langle id, l', j'\rangle, 3\rangle, v'\rangle, \sigma') = true$, where $p_l'$ is the leader of view $j'$, then $v=v'$. Therefore, since an honest party decides on $v'$ in view $j'$ only if it gets the value $v'$ as a $view-change$ message with $\langle v' \sigma' \rangle$ such that: threshold-validate$(\langle \langle \langle id, l', j'\rangle, 3\rangle, v'\rangle, \sigma') = true$, we get that if an honest party decide $v'$ in view $j'$, then $v=v'$.

\textit{ Termination:}  From lemma \ref{lem:l19}, we know every view completes. After applying the lemma \ref{lem:l19} inductively we notice that the number of views is unbounded and the honest parties keep sending messages. Therefore, the protocol satisfies the termination property. 

\textit{External-Validity:} To prove the External-Validity property of the Efficient-VABA, we prove that if the selected honest parties start with external valid value $v$, then all messages broadcast by honest parties are externally valid for the P-PB protocol. 

Lemma \ref{E-VABA: selected} proves that only the selected parties can broadcast their requests. The rest of the proof follows the lemma \ref{lem:l15}. In view $j = 1$, since all selected parties start with externally valid values for byzantine agreement, but, a valid prepare is not required in view $1$, we see that all messages broadcast by the selected honest parties in view 1 are externally valid for the 4-step P-PB.

In view $j > 1$. Let i = max({l | there is an honest party $p$ who’s $lock = l$ when it begins view $j$}). We need to show that when a party
begins view $j$, its $PREPARE$ variable is equal to $\langle l, v, \sigma_i \rangle$ such that: 

\begin{enumerate}
    \item $\langle l, v, \sigma_i \rangle$ is a valid prepare for $v$.
    \item $l \geq i$.
    \item $v$ is an external valid value for byzantine agreement
\end{enumerate}

From lemma \ref{lem:l14}, all honest parties set their $PREPARE$ variable to $\langle i, v, \sigma \rangle$ when they completes view $i$ such that: $prepare = \langle j, \sigma \rangle$ is valid for $v$. After view $i$, the $PREPARE$ variable is updated to $\langle l, v$'$
, \sigma$'$\rangle$ by an honest party $p$ only if
the party $p$ receives a $VIEW-CHANGE$ message with $\langle v', \langle l,\sigma $ $'$$\rangle\rangle$  such that:  $l > i$ and $\langle l, \sigma'$$\rangle$ is a valid $prepare$ for $v'$. Therefore, we see that (1) and (2) are satisfied. Now by the P-PB-provability property, we get that at least one honest party delivers $\langle v, \sigma''$$\rangle$ for some $\sigma''$ at some instance of P-PB, and so $v$ is an external valid value for byzantine agreement.
\end{proof}

\subsection{Efficiency Analysis}

First, let us briefly go through the process of the  Efficient-VABA protocol. From Algorithm \ref{algo:Efficient-VABA}, the message exchange happens when the parties select $f+1$ parties using the committee selection protocol. When the selected parties promote their request using the $proposal-promotion$ sub-protocol, propose and suggest a completed promotion using the $propose-suggest$, elect a leader, and deliver the value and proof using view-change.

 The proposed protocol ensures the termination in the view change step. Since the protocol exits after the constant rounds of messages, the expected \textit{time complexity} of the protocol is $O(1)$. The main message and communication cost of the Efficient-VABA protocol comes from the four P-PB sub-protocol invocations. In the first step of the P-PB sub-protocol, $f+1$ parties broadcast the messages to $n$ parties, and in the second step, $n$ parties reply to $(f+1)$ parties. So the total number of messages is $n*(f+1)$, which reduces the total number of messages and the related computations by a factor of $(f+1)$. The $suggestion$, $leader-election$ and $view-change$ steps exchange three messages, and the communication is all-to-all, so the message complexity is $O(n^2)$ and the communication complexity is $O(n^2(L+K))$. Both are optimal. Here, $L$ is the bit length of the message and the proof; $K$ is the bit length of digital signatures.

\section{Conclusion}
In this paper, we have presented a new algorithm to solve the asynchronous Byzantine agreement problem. The main novelty is using a subset request to reduce the number of messages and computations. To the best of our knowledge, this approach is novel. The proposed algorithm offers various advantages. We reduce the number of provable broadcast instances by choosing a subset of parties as a leader. The reduced number of broadcasts also makes a party wait for only one proof of completion instead of $n-f$, which also significantly reduces the wait time of the protocol. We have also proved that most of the steps of the protocol require only $n\times(f+1)$ number of messages, in contrast to $O(n^2)$ for the VABA protocol. In addition, we have shown that the protocol maintains optimal properties like resilience, message, and communication complexities.

There are various venues for further research. First, the current protocol still requires $n$ to $n$ communications for some of the steps. Future research can discover how to remove the $n$ to $n$ communication. 



\bibliography{references}

\appendix

\section{Related work} \label{Related :Work}
 A Byzantine agreement protocol can reach an agreement if there are a total of $n$ parties, among them $f$ are faulty, and $n$ is at least greater than $3f$ \cite{BYZ10}. Therefore, to have an optimal resilience, a protocol must satisfy the following constraint:$n=3f+1$. Fischer, Lynch, and Paterson\cite{CONS03} gave a theorem that proved that Byzantine agreement protocol does not have a termination property in asynchronous settings even if there is only one non-Byzantine failure. After that, Ben-or \cite{BYZ11} proved that with the help of randomness, these protocols can terminate almost with a probability of $1$. The classic work of Cachin et al. \cite{SECURE02} presented asynchronous binary agreement (ABA), which is the building block of the MVBA protocol \cite{SECURE02}. To realize the security and the randomness, the protocol utilizes the cryptographic abstraction \cite{THRESH01,BORN01}. This abstraction is also used to design the fault-tolerant state machine replication protocol \cite{CACHIN01,VICTOR01,CACHIN02,BYZ20}. Though Cachin's work is first practical solution for an asynchronous network and has optimal resilience, it suffers from the high communication complexity. Recent work of Abraham et al. \cite{BYZ17} removes the $O(n^3)$ term from the communication complexity. Dumbo-MVBA \cite{BYZ20} also uses the MVBA as a base and achieves $O(n^2)$ message complexity but uses erasure code to minimize the message complexity.\; 

Before the surge of DeFi applications, most of the protocol assumed the partially synchronous communication model, which was introduced by Dwork, Lynch, and  Stockmeyer \cite{BYZ05}. This model assumes a known time bound $\Delta$ for message delay; that is, honest parties deliver their messages in this time bound after a \textit{global stabilization time (GST)}. After GST, a protocol advances deterministically \cite{CONS03}. 

Castro et al. \cite{BYZ08} provided the first Byzantine fault-tolerance protocol that assumes a partially synchronous model. The core of the protocol is a leader. If a leader fails to deliver the result in $\Delta$ time-bound, then the parties start the leader election to elect a new leader. An adversary, with the help of the Byzantine parties, can exploit this time-bound to drive the parties to find a new leader and make the leader election process infinite \cite{HONEYBADGER01}. Many other protocols proposed in the literature \cite{BYZ24,BYZ25,BYZ26,BYZ27,BYZ28,BYZ29,Hotstuff01} face the same challenges.

\end{document}